\providecommand{\tabularnewline}{\\}
\theoremstyle{plain}
\newtheorem{thm}{\protect\theoremname}
\theoremstyle{plain}
\newtheorem{prop}[thm]{\protect\propositionname}
\newenvironment{proof}[1][\protect\proofname]{\par
	\normalfont\topsep6\p@\@plus6\p@\relax
	\trivlist
	\itemindent\parindent
	\item[\hskip\labelsep\scshape #1]\ignorespaces
}{%
	\endtrivlist\@endpefalse
}
\providecommand{\proofname}{Proof}
\journal{arXiv}
\providecommand{\propositionname}{Proposition}
\providecommand{\theoremname}{Theorem}
\begin{document}

\begin{frontmatter}{}

\title{Extensive networks would eliminate the demand for pricing formulas}

\author[labela]{Jaegi Jeon}

\author[labela]{Kyunghyun Park}

\author[labelb]{Jeonggyu Huh\corref{cor1}}

\address[labela]{Department of Mathematical Sciences, Seoul National University,
Seoul 08826, Korea}

\address[labelb]{Department of Statistics, Chonnam National University, Gwangju 61186,
Korea}

\cortext[cor1]{corresponding author.\\
 E-mail address: huhjeonggyu@jnu.ac.kr}
\begin{abstract}
In this study, we generate a large number of implied volatilities for the Stochastic Alpha Beta Rho (SABR) model using a graphics processing unit (GPU) based simulation and enable an extensive neural network to learn them. This model does not have any exact pricing formulas for vanilla options, and neural networks have an outstanding ability to approximate various functions. Surprisingly, the network reduces the simulation noises by itself, thereby achieving as much accuracy as the Monte-Carlo simulation. Extremely high accuracy cannot be attained via existing approximate formulas. Moreover, the network is as efficient as the approaches based on the formulas. When evaluating based on high accuracy and efficiency, extensive networks can eliminate the necessity of the pricing formulas for the SABR model. Another significant contribution is that a novel method is proposed to examine the errors based on nonlinear regression. This approach is easily extendable to other pricing models for which it is hard to induce analytic formulas.
\end{abstract}
\begin{keyword}
efficient pricing; deep learning; SABR model; nonlinear regression; GPU-based simulation; neural network
\end{keyword}

\end{frontmatter}{}

\section{Introduction}

Neural networks are often employed for regression because they have an
outstanding ability to approximate a wide range of functions (refer
to \citet{cybenko1989approximation} and \citet{hornik1989multilayer}
for the celebrated universal approximation theorem). Only a decade
ago, simple models, such as linear models, were preferred over neural networks. Most researchers used to believe that too many
parameters led to notorious overfitting, which has resulted in an artificial intelligence winter in the past. However, as numerous techniques have been invented to prevent such an occurrence, most researchers like to utilize such networks in their research. To review the literature concerning the application of networks in finance, refer to \citet{ruf2019neural} for network-based option pricing and \citet{henrique2019literature} for market predictions.

\begin{sidewaystable*}
\begin{centering}
\begin{tabular}{cccccc}
\toprule 
\multicolumn{2}{c}{} & Culkin (2017) & Brostrom (2018) & Ferguson (2018) & McGhee (2018)\tabularnewline
\midrule
\midrule 
\multicolumn{2}{c}{base model} & Black-Scholes & Black-Scholes & Black-Scholes & SABR ($\beta=1$)\tabularnewline
\midrule 
\multicolumn{2}{c}{options type} & vanilla & vanilla & basket & vanilla\tabularnewline
\midrule 
\multicolumn{2}{c}{pricing method} & closed formula & closed formula & MC simulation & finite difference method\tabularnewline
\midrule 
\multicolumn{2}{c}{network inputs} & $f_{0}/K$, $T$, $r$, $q$, $\bar{\alpha}$ & $f_{0}/K$, $T$, $r$, $\bar{\alpha}$ & $f_{0,i}/K$, $T$, $\bar{\alpha}_{i}$, $\rho_{ij}$ ($1\leq i,j\leq6$) & $f_{0}/K$, $T$, $\alpha_{0}$, $\nu$, $\rho$\tabularnewline
\midrule 
\multicolumn{2}{c}{network outputs} & $c/K$ & $c/K$ & $c/K$ & $\sigma^{I}$\tabularnewline
\midrule 
\multicolumn{2}{c}{\# of training samples} & 240k & 800k & 500M & 2.5M\tabularnewline
\multicolumn{2}{c}{(precision)} & (exact) & (exact) & (10k paths) & (\{$n_{t},n_{f},n_{\alpha}$\}=\{$400,200,100$\})\tabularnewline
\midrule 
\multicolumn{2}{c}{\# of test samples} & 60k & 200k & 5k & 500k\tabularnewline
\multicolumn{2}{c}{(precision)} & (exact) & (exact) & (100M paths) {[}?{]} & (\{$n_{t},n_{f},n_{\alpha}$\}=\{$400,200,100$\})\tabularnewline
\midrule
\multicolumn{2}{c}{\# of epochs} & 10 & 50 & 95 {[}?{]} & N/A\tabularnewline
\midrule 
\multicolumn{2}{c}{batch size} & 64 & 200 & 50k & N/A\tabularnewline
\midrule
\multicolumn{2}{c}{\# of hidden layers} & 4 & 2 & 6 & 1\tabularnewline
\midrule 
\multicolumn{2}{c}{\# of nodes per layer} & 100 & 256 & 1,400 & 1,000\tabularnewline
\midrule 
\multicolumn{2}{c}{activation functions} & ReLU, ELU, etc. & ReLU & ReLU & softplus, ReLU\tabularnewline
\midrule 
\multicolumn{2}{c}{optimizer} & SGD & ADAM & ADAM & ADAM\tabularnewline
\midrule 
\multicolumn{2}{c}{architecture tuning} & none & \# of layers, \# of nodes & \# of nodes & \# of nodes\tabularnewline
\midrule 
\multirow{2}{*}{test loss} & MSFE & 1.25E-4 & 7.27E-8 & 5E-5 {[}?{]} & N/A\tabularnewline
 & MSPE & same as above & same as above & N/A & N/A\tabularnewline
\bottomrule
\end{tabular}
\par\end{centering}
\centering{}\caption{\label{tab:liter1} This table outlines various methods to train neural networks using option prices $c$ (or implied volatilities $\sigma^{I}$). If there are several results in a work, only the best is written here.
The base models are expressed as follows: $df_{t}=\left(r-q\right)f_{t}dt+\bar{\alpha}f_{t}dW_{t}$
(Black-Scholes), $df_{t}=\left(r-q\right)f_{t}dt+\sqrt{y_{t}}f_{t}dW_{t}$,
$dy_{t}=\kappa\left(\bar{y}-y_{t}\right)+\nu\sqrt{y_{t}}dZ_{t}$ (Heston),
$df_{t}=\alpha_{t}f_{t}^{\beta}dW_{t}$, $d\alpha_{t}=\nu\alpha_{t}dZ_{t}$
(Stochastic Alpha Beta Rho [SABR]), where $dW_{t}dZ_{t}=\rho dt$. $K$ and $T$ are the strike and maturity of the vanilla option, respectively. The LR, SGD, MSFE, and MSPE stand for the learning rate, the stochastic gradient descent, the mean squared fitting error, and the mean squared prediction error, respectively (refer to Section \ref{sec:theorem} for the MSFE and MSPE). Finally, we used prefixes for the international system of
units: $1k=1000$ and $1M=1000k$. The mark {[}?{]} indicates that
the value is inferred using the context or the figures in the work.}
\end{sidewaystable*}
\begin{sidewaystable*}
\begin{centering}
\begin{tabular}{ccccccc}
\toprule 
\multicolumn{2}{c}{} & \multicolumn{2}{c}{Liu (2019)} & \multicolumn{2}{c}{Hirsa (2019)} & our method (2020)\tabularnewline
\midrule
\midrule 
\multicolumn{2}{c}{base model} & Black-Scholes & Heston & Black-Scholes & Heston & SABR ($\beta=1$)\tabularnewline
\midrule 
\multicolumn{2}{c}{options type} & vanilla & vanilla & vanilla & vanilla & vanilla\tabularnewline
\midrule 
\multicolumn{2}{c}{pricing method} & closed formula & Fourier-cosine series & closed formula & fast Fourier transform & MC simulation\tabularnewline
\midrule 
\multicolumn{2}{c}{network inputs} & $f_{0}/K$, $T$, $r$, $\bar{\alpha}$ & $f_{0}/K$, $T$, $r$, $y_{0}$, $\kappa$, $\bar{y}$, $\nu$, $\rho$ & $f_{0}/K$, $T$, $r$, $q$, $\bar{\alpha}$ & $f_{0}/K$, $T$, $r$, $q$, $y_{0}$, $\kappa$, $\bar{y}$, $\nu$,
$\rho$ & $K/f_{0}$, $T$, $\alpha_{0}$, $\nu$, $\rho$\tabularnewline
\midrule 
\multicolumn{2}{c}{network outputs} & $c/K$, $\sigma^{I}$ & $c$, $\sigma^{I}$ & $c/K$ & $c/K$ & $\sigma^{I}$\tabularnewline
\midrule 
\multicolumn{2}{c}{\# of training samples} & 900k & 900k (800k+100k) & 240k & 240k & 520M (480M+40M)\tabularnewline
\multicolumn{2}{c}{(precision)} & (exact) & (almost exact) & (exact) & (almost exact) & (500k paths)\tabularnewline
\midrule 
\multicolumn{2}{c}{\# of test samples} & 100k & 100k & 60k & 60k & 140M (40M+100M)\tabularnewline
\multicolumn{2}{c}{(precision)} & (exact) & (almost exact) & (exact) & (almost exact) & (500k, 12.5M paths)\tabularnewline
\midrule 
\multicolumn{2}{c}{\# of epochs} & \multicolumn{2}{c}{3000} & \multicolumn{2}{c}{N/A} & early stopping\tabularnewline
\midrule 
\multicolumn{2}{c}{batch size} & \multicolumn{2}{c}{1024} & \multicolumn{2}{c}{N/A} & 100\tabularnewline
\midrule 
\multicolumn{2}{c}{\# of hidden layers} & \multicolumn{2}{c}{4} & \multicolumn{2}{c}{4} & 2\tabularnewline
\midrule 
\multicolumn{2}{c}{\# of nodes per layer} & \multicolumn{2}{c}{400} & \multicolumn{2}{c}{120} & 7,000\tabularnewline
\midrule 
\multicolumn{2}{c}{activation functions} & \multicolumn{2}{c}{ReLU} & \multicolumn{2}{c}{ReLU, ELU, etc.} & ReLU\tabularnewline
\midrule 
\multicolumn{2}{c}{optimizer} & \multicolumn{2}{c}{LR decaying ADAM} & \multicolumn{2}{c}{ADAM} & LR decaying ADAM\tabularnewline
\midrule 
\multicolumn{2}{c}{architecture tuning} & \multicolumn{2}{c}{\# of nodes} & \multicolumn{2}{c}{\# of layers, \# of nodes} & \# of layers, \# of nodes\tabularnewline
\midrule 
\multirow{3}{*}{test loss} & \multirow{2}{*}{MSFE} & 8.21E-9 ($c/K$) & 1.65E-8 ($c$) & 1.8E-5 {[}?{]} & 2.5E-5 {[}?{]} & 5.5E-6\tabularnewline
 &  & 1.55E-8 ($\sigma^{I}$) & 5.07E-7 ($\sigma^{I}$) &  &  & \tabularnewline
 & MSPE & same as above & same as above & same as above & same as above & 2.0E-7\tabularnewline
\bottomrule
\end{tabular}
\par\end{centering}
\centering{}\caption{\label{tab:liter2}This table summarizes methods to train neural networks using option prices $c$ (or implied volatilities $\sigma^{I}$). If there are several results in a work, only the best is written here.
The base models are expressed as follows: $df_{t}=\left(r-q\right)f_{t}dt+\bar{\alpha}f_{t}dW_{t}$
(Black-Scholes), $df_{t}=\left(r-q\right)f_{t}dt+\sqrt{y_{t}}f_{t}dW_{t}$,
$dy_{t}=\kappa\left(\bar{y}-y_{t}\right)+\nu\sqrt{y_{t}}dZ_{t}$ (Heston),
$df_{t}=\alpha_{t}f_{t}^{\beta}dW_{t}$, $d\alpha_{t}=\nu\alpha_{t}dZ_{t}$
(SABR), where $dW_{t}dZ_{t}=\rho dt$. In particular, $K$ and $T$ are the strike and maturity of the vanilla option, respectively. The LR, SGD, MSFE,
and MSPE stand for a learning rate, the stochastic gradient descent,
the mean squared fitting error, and the mean squared prediction error,
respectively (refer to Section \ref{sec:theorem} for the MSFE and
MSPE). Finally, we used prefixes for the international system of
units: $1k=1000$ and $1M=1000k$. The mark {[}?{]} signifies that
the value is deduced using the context or the figures in the work.}
\end{sidewaystable*}

Since the work of \citet{hutchinson1994nonparametric}, many researchers
have been studying artificial neural networks to predict the
option prices $c$ (or the implied volatilities $\sigma^{I}$) for
particular parametric models, such as the Black-Scholes model \citep{black1973pricing},
the Heston model \citep{heston1993closed}, and the SABR model \citep{hagan2002managing}.
We selected six related studies \citep{culkin2017machine,brostrom2018exotic,ferguson2018deeply,mcghee2018artificial,liu2019pricing,hirsa2019supervised}
and summarized their approaches in Tables \ref{tab:liter1} and
\ref{tab:liter2}. Interestingly, we can observe similar propensities
in the studies. They mainly focus on the models that can
provide efficient pricing formulas for vanilla options such as the
Black-Scholes model and the Heston model. For instance, the Heston
model gives a closed-form characteristic function, enabling cost-effective
option pricing through a Fourier transform \citep{rouah2013heston}.
Note that all works other than \citet{mcghee2018artificial} are either associated with the Black-Scholes model or the Heston model. This correlation may exist because training samples are generated exhaustively to eliminate the necessity of numerical algorithms. We, however, think that their contributions are a little marginal from a practical perspective as even without the networks, the option prices can already be efficiently obtained.

Notably, \citet{mcghee2018artificial} aimed to allow a neural
network to learn the vanilla option prices for the SABR model. The model
does not offer exact and efficient solutions for a true option
value $c_{true}$. Thus, a finite difference method (FDM) for second-order in space and first-order in time was utilized to produce an approximation $c_{approx}$ of $c_{true}$. Consequently, the network proposed by the study produces outcomes much more quickly than the FDM and outperforms the well-known approximation of \citet{hagan2002managing}
in terms of accuracy. Nonetheless, \citeauthor{mcghee2018artificial} only tries
to make the predicted value $c_{net}$ of the network come close to $c_{approx}$.
In essence, the prediction error $c_{net}-c_{true}$ is not considered
in the study, and only the reduction of the fitting error $c_{net}-c_{approx}$ is studied. Nevertheless, $c_{net}$ should be close
to $c_{true}$ (i.e., not to $c_{approx}$). Moreover, the research is not
conducted systematically. We could not find any mentions about the
number of epochs, the batch size, the weight initialization method,
and the loss values for the training and test datasets.
In particular, the types of neural networks tested in the research
are fairly limited because the number of the hidden layers for the
networks is fixed at one.

We believe that it is desirable to choose a parametric model without
an exact pricing formula for vanilla options and investigate a training
method of neural networks using numerous option prices for the model.
Therefore, we decided to study the SABR model. Furthermore, a pricing method to generate big data should be efficient enough and easily applicable to a wide range of models, such as the rough volatility model \citep{gatheral2018volatility}.
Standard procedures satisfying the requirements may be the FDM and
Monte-Carlo simulation (MC). Particularly, when the number
of factors for underlying models is smaller than four, the FDM is
usually more efficient and stable than the MC (see \citet{wilmott2013paul}).
This evidence denotes that the FDM may be more appropriate to price vanilla options under the SABR model than the MC because the model has two factors.
Nevertheless, we chose the MC to produce $c_{approx}$ because we
think that neural networks have the potential to filter out symmetric
noises caused by the MC, but they are unable to rectify the bias caused
by the FDM. In other words, we do not choose the FDM but rather the MC because $E\left[c_{approx}\right]=c_{true}$ for the MC case, but $E\left[c_{approx}\right]\neq c_{true}$ for the FDM case. For example, \citet{ferguson2018deeply} made two datasets using
the MC, where one was precise but small, and the other was big yet imprecise, and they trained two networks with the datasets, respectively. Interestingly, the network using the larger and less precise dataset gives better results. This outcome indicates that the network can reduce the MC noises due to the integration of larger data. Thus, we also expect that the MC errors are reduced by a neural network provided the training data are sufficiently large. 

Furthermore, based on nonlinear regression analysis, we propose a
novel method to indirectly estimate the prediction error $c_{net}-c_{true}$
even if $c_{true}$ cannot be obtained. In other studies,
the prediction error is usually expected to be greater than the approximation error $c_{approx}-c_{true}$ because it is a sum of the approximation error and the fitting error $c_{net}-c_{true}$ that arises during training. However, neural networks are able
to reduce a substantial part of the approximation error and produce
$c_{net}$ close to $c_{true}$. The method developed in this study
would be an invaluable tool to evaluate the degree of the distance
between $c_{net}$ and $c_{true}$. When analyzing test results with
the approach, the accuracy of our network is estimated to be comparable
to that of about 13 million MC simulations.

Further, it is noticeable in the literature that although similar
approaches are adopted for the same model, the loss of test
data differ considerably depending on the details of the training
methods. This notion is validated via the mean squared fitting errors (MSFE) of \citet{culkin2017machine}, \citet{brostrom2018exotic}, \citet{liu2019pricing}, and \citet{hirsa2019supervised} for the Black-Scholes model in Table \ref{tab:liter1} and \ref{tab:liter2}. The MSFEs of \citeauthor{brostrom2018exotic} and \citeauthor{liu2019pricing} ($7.27\times10^{-8}$ and $8.21\times10^{-9}$, respectively) seem to be superior to the MSFEs of \citeauthor{culkin2017machine}
and \citeauthor{hirsa2019supervised} ($1.25\times10^{-4}$ and $1.85\times10^{-5}$, respectively). This association may be because the latter networks have narrower structures or learn from smaller data than the former. The former have 400 nodes per layer, and the latter have 100 or 140 nodes. Furthermore, the training data sizes of the former are 800 or 900 thousands, while the latter is 240 thousands. Either or both of the two options can facilitate performance gaps. These gaps can also be confirmed through the MSFEs for the Heston model of \citeauthor{liu2019pricing} and \citeauthor{hirsa2019supervised} ($1.65\times10^{-8}$ and $2.5\times10^{-5}$, respectively). This notion implies that it is not easy to perfectly fit neural networks to the generated data. For a better goodness-of-fit, one should consider several factors, such as data size, network architecture, and the optimization method. Therefore, we try to reduce the MSFE in our experiment by generating enormous data and tuning various hyperparameters for a considerably accurate fit. 

In summary, this study contributes to the literature in the following
ways. First, we generate numerous data using GPU-based simulations
and verify that the network trained with the data provides extraordinarily
accurate results. Second, we provide a novel method to analyze the
prediction errors $c_{net}-c_{true}$ by proposing an unbiased and
consistent estimator associated with the prediction error. We make
a new attempt using nonlinear regression, which forms the theoretical
basis for the phenomenon that the network produces prediction
errors smaller than the approximation errors $c_{approx}-c_{true}$.

The remainder of this paper is organized as follows. In Section 2, a new
method of analyzing prediction errors with nonlinear regression is
introduced. We then discuss the pricing methods of options in
the SABR model and detailed methods of generating data for network
learning in Section 3. In Section 4, we train neural networks of various
structures and assess the impact of training data size on network
performance. Finally, Section 5 concludes the study.

\section{\label{sec:theorem}Nonlinear regression of numerous implied volatilities}

As mentioned in the introduction, we only focus on the parametric
models that do not have any exact pricing formulas. Therefore, an
important step in this work is to generate numerous approximate implied
volatilities $\sigma_{approx,l}^{I}$ for exact volatilities $\sigma_{true,l}^{I}$($l=1,2,\cdots,L$) under a parametric model, which will be used as the material to train networks. Each volatility $\sigma_{approx,l}^{I}$ is generated on randomly chosen parameters, such as $\theta_{l,1}$, $\theta_{l,2}$, $\cdots$, $\theta_{l,n_{\theta}}$, maturity $T_{l}$, and strike $K_{l}$.

As $\sigma_{true,l}^{I}$ is determined by $\theta_{l,1}$, $\theta_{l,2}$,
$\cdots$, $\theta_{l,n_{\theta}}$, $T_{l}$, and $K_{l}$, there exists
a function $\tilde{h}$ such that $\sigma_{true,l}^{I}=\tilde{h}\left(x_{l}\right)$
for $x_{l}=\left(1,\theta_{l,1},\theta_{l,2},\cdots,\theta_{l,n_{\theta}},T_{l},K_{l}\right)$.
According to the renowned universal approximation theorem, it is
assumed that a network with enough number of weights $\Gamma=\{\gamma_{1},\gamma_{2},\cdots,\gamma_{n_{\gamma}}\}$
can accurately approximate the function value $\tilde{h}\left(x_{l}\right)$
as its output $h\left(x_{l};\Gamma\right)$. Thus, if $M$ simulations
are run to obtain the approximate volatility $\sigma_{approx,l}^{I}$,
the central limit theorem yields the following relation:
\begin{align*}
\sigma_{approx,l}^{I}\left(M\right) & =h\left(x_{l};\Gamma\right)+\epsilon_{l}\left(M\right),
\end{align*}
where $\epsilon_{l}\left(M\right)\sim N\left(0,\beta_{l}^{2}/M\right)$
for $\beta_{l}>0$ (\citet{glasserman2013monte}). Here, the time interval
for the simulations is supposed to be small enough to neglect the
bias of $\sigma_{approx,l}^{I}$ against $\sigma_{true,l}^{I}$, and
the notation $\omega\left(M\right)$ is applied to emphasize that
$\omega$ is a random variable that is dependent on $M$. 

In the perspective of nonlinear ordinary regression (\citet{montgomery2012introduction}),
an unbiased and consistent estimator $\hat{\Gamma}$ of $\Gamma$
is 

\[
\hat{\Gamma}=\underset{\Gamma}{{\rm argmin}}\mathcal{L}\left(\Gamma;M\right),
\]
where
\[
\mathcal{L}\left(\Gamma;M\right)=\frac{1}{2}\sum_{l=1}^{L}\left(h\left(x_{l};\Gamma\right)-\sigma_{approx,l}^{I}\left(M\right)\right)^{2}.
\]
The Jacobian and Hessian matrices $\boldsymbol{J}$ and $\boldsymbol{H}$
of $\mathcal{L}\left(\hat{\Gamma};M\right)$ should satisfy the optimality
condition that $\boldsymbol{J}$ and $\boldsymbol{H}$ are zero and
positive definite, respectively. On the other hand, we can derive
\begin{equation}
\boldsymbol{J}=\boldsymbol{\epsilon}\boldsymbol{Q},\quad{\rm \boldsymbol{H}\approx\boldsymbol{Q}^{T}\boldsymbol{Q}},\label{eq:jaco_hess}
\end{equation}
(\citet{hansen2013least}), where 
\[
\boldsymbol{\epsilon}=\left[\begin{array}{cccc}
\epsilon_{1} & \epsilon_{2} & \cdots & \epsilon_{L}\end{array}\right],\quad\boldsymbol{Q}=\left[\begin{array}{cccc}
\frac{\partial h\left(x_{1};\hat{\Gamma}\right)}{\partial\gamma_{1}} & \frac{\partial h\left(x_{1};\hat{\Gamma}\right)}{\partial\gamma_{2}} & \cdots & \frac{\partial h\left(x_{1};\hat{\Gamma}\right)}{\partial\gamma_{n_{\gamma}}}\\
\frac{\partial h\left(x_{2};\hat{\Gamma}\right)}{\partial\gamma_{1}} & \frac{\partial h\left(x_{2};\hat{\Gamma}\right)}{\partial\gamma_{2}} & \cdots & \frac{\partial h\left(x_{2};\hat{\Gamma}\right)}{\partial\gamma_{n_{\gamma}}}\\
\vdots & \vdots & \ddots & \vdots\\
\frac{\partial h\left(x_{L};\hat{\Gamma}\right)}{\partial\gamma_{1}} & \frac{\partial h\left(x_{L};\hat{\Gamma}\right)}{\partial\gamma_{2}} & \cdots & \frac{\partial h\left(x_{L};\hat{\Gamma}\right)}{\partial\gamma_{n_{\gamma}}}
\end{array}\right].
\]
Furthermore, $\hat{\Gamma}$ follows a multivariate normal distribution
as follows: 
\[
\hat{\Gamma}\sim N\left(\Gamma,\frac{1}{LM}\boldsymbol{W}^{-1}\boldsymbol{W}^{\beta}\boldsymbol{W}^{-1}\right),
\]
where $\boldsymbol{W}=\frac{1}{L}\boldsymbol{Q}^{T}\boldsymbol{Q}$,
$\boldsymbol{W}^{\beta}=\frac{1}{L}\boldsymbol{Q}^{T}\boldsymbol{B}\boldsymbol{Q}$,
and $\boldsymbol{B}$ is the diagonal matrix with the $l$th diagonal
entry $\beta_{l}^{2}$. Conversely, owing to the law of large
numbers, the elements of $\boldsymbol{W}$ and $\boldsymbol{W}_{\beta}$
converge in probability to their respective expected values as $L\rightarrow\infty$.
In other words, 
\begin{gather*}
\boldsymbol{W}_{k,k'}=\left\langle \frac{\partial h\left(x_{l};\hat{\Gamma}_{M}\right)}{\partial\gamma_{k}}\frac{\partial h\left(x_{l};\hat{\Gamma}_{M}\right)}{\partial\gamma_{k'}}\right\rangle _{l,L}\overset{p}{\rightarrow}\left\langle \frac{\partial h\left(x_{l};\hat{\Gamma}_{M}\right)}{\partial\gamma_{k}}\frac{\partial h\left(x_{l};\hat{\Gamma}_{M}\right)}{\partial\gamma_{k'}}\right\rangle _{l},\\
\boldsymbol{W}_{k,k'}^{\beta}=\left\langle \beta_{l}^{2}\frac{\partial h\left(x_{l};\hat{\Gamma}_{M}\right)}{\partial\gamma_{k}}\frac{\partial h\left(x_{l};\hat{\Gamma}_{M}\right)}{\partial\gamma_{k'}}\right\rangle _{l,L}\overset{p}{\rightarrow}\left\langle \beta_{l}^{2}\frac{\partial h\left(x_{l};\hat{\Gamma}_{M}\right)}{\partial\gamma_{k}}\frac{\partial h\left(x_{l};\hat{\Gamma}_{M}\right)}{\partial\gamma_{k'}}\right\rangle _{l}
\end{gather*}
as $L\rightarrow\infty$, where $\left\langle \varphi_{l}\right\rangle _{l,L}$
and $\left\langle \varphi_{l}\right\rangle _{l}$ are a sample mean
of size $L$ and the population mean of a random variable $\varphi_{l}$,
respectively. That is, $\left\langle \varphi_{l}\right\rangle _{l,L}=\frac{1}{L}\sum_{l=1}^{L}\varphi_{l}$,
and $\left\langle \varphi_{l}\right\rangle _{l}=E\left[\varphi_{l}\right]$.
Notably, $E\left[\left\langle \varphi_{l}\right\rangle _{l,L}\right]=\left\langle \varphi_{l}\right\rangle _{l}$.
Accordingly , if $L\gg1$, then $\boldsymbol{W}$ and $\boldsymbol{W}^{\beta}$
rarely change, although $L$ does change a little.

We now consider another dataset for an out-of-sample test, constituting
$L'$ volatilities $\sigma_{approx,l}^{I}$ to approximate $\sigma_{true,l}^{I}$
($l=1,2,\cdots,L$'), each of which is generated from $M'$ simulations.
With regard to the dataset, the fitting error $\epsilon_{fit,l}$,
the prediction error $\epsilon_{pred,l}$, and the approximation error
$\epsilon_{approx,l}$ are defined as follows:
\begin{gather*}
\epsilon_{fit,l}\left(\hat{\Gamma},M'\right)=\sigma_{net,l}^{I}\left(\hat{\Gamma}\right)-\sigma_{approx,l}^{I}\left(M'\right),\\
\epsilon_{pred,l}\left(\hat{\Gamma}\right)=\sigma_{net,l}^{I}\left(\hat{\Gamma}\right)-\sigma_{true,l}^{I},\quad\epsilon_{approx,l}\left(M'\right)=\sigma_{approx,l}^{I}\left(M'\right)-\sigma_{true,l}^{I},
\end{gather*}
where $\sigma_{net,l}^{I}\left(\hat{\Gamma}\right)=h\left(x_{l};\hat{\Gamma}\right)$.
Notably, $\epsilon_{pred,l}$ can be decomposed into $\epsilon_{fit,l}$
and $\epsilon_{approx,l}$, that is, 
\[
\epsilon_{pred,l}\left(\hat{\Gamma}\right)=\epsilon_{fit,l}\left(\hat{\Gamma},M'\right)+\epsilon_{approx,l}\left(M'\right).
\]
Note that finding $\epsilon_{fit,l}$ is straightforward while determining
$\epsilon_{pred,l}$ and $\epsilon_{approx,l}$ is not simple because
$\sigma_{true,l}$ is unknown. Many researchers intuitively expect
$|\epsilon_{pred,l}|>|\epsilon_{approx,l}|$ because they guess that the
signs of $\epsilon_{fit,l}$ and $\epsilon_{approx,l}$ are the same.
Nonetheless, it would be the best if $\epsilon_{fit,l}$ canceled out
a part of $\text{\ensuremath{\epsilon}}_{approx,l}$ so that $|\epsilon_{pred,l}|<|\epsilon_{approx,l}|$.
This mechanism is possible only when the neural network can
reduce the noises in $\epsilon_{approx,l}$ and find more plausible
values by itself. Specifically, we prove that self-correction of networks
is feasible, and it will be demonstrated in the tests of Section
\ref{sec:test}. 

Additionally, the errors $\epsilon_{fit,l}$, $\epsilon_{pred,l}$,
and $\epsilon_{approx,l}$ follow their respective normal distributions
as below: 
\begin{gather}
\epsilon_{fit,l}\left(\hat{\Gamma},M'\right)\sim N\left(0,\frac{\beta_{l}^{2}}{M'}+\frac{1}{LM}\boldsymbol{q_{l}}\boldsymbol{W}^{-1}\boldsymbol{W}^{\beta}\boldsymbol{W}^{-1}\boldsymbol{q_{l}^{T}}\right),\nonumber \\
\epsilon_{approx,l}\left(M'\right)\sim N\left(0,\frac{\beta_{l}^{2}}{M'}\right),\quad\epsilon_{pred,l}\left(\hat{\Gamma}\right)\sim N\left(0,\frac{1}{LM}\boldsymbol{q_{l}}\boldsymbol{W}^{-1}\boldsymbol{W}^{\beta}\boldsymbol{W}^{-1}\boldsymbol{q_{l}^{T}}\right),\label{eq:err_dist}
\end{gather}
where $\boldsymbol{q_{l}}$ is the $l$th row vector of $\boldsymbol{Q}$.
As mentioned above, it is extremely important to note that finding $\epsilon_{pred,l}$
is infeasible because $\sigma_{true,l}^{I}$ is unknown. This problem is serious because we need $\epsilon_{pred,l}$ to evaluate
the performance of the network. Although some might presume that the difficulty can be circumvented by computing $\boldsymbol{Q}$, $\boldsymbol{W}$, and $\boldsymbol{W}_{\beta}$, the computation is severely unstable due to the countless parameters of the network. 

To resolve the problem, we define three mean squared errors (MSE)
for the test dataset, namely the mean squared fitting error $\mathcal{E}_{fit}$(MSFE), the mean squared prediction error $\mathcal{E}_{pred}$ (MSPE), and the mean squared approximation error $\mathcal{E}_{approx}$ (MSAE).
They are given by the following:
\begin{gather*}
\mathcal{E}_{fit}\left(\hat{\Gamma},M';L'\right)=\frac{1}{L'}\sum_{l=1}^{L'}\epsilon_{fit,l}^{2}\left(\hat{\Gamma},M'\right),\\
\mathcal{E}_{pred}\left(\hat{\Gamma};L'\right)=\frac{1}{L'}\sum_{l=1}^{L'}\epsilon_{pred,l}^{2}\left(\hat{\Gamma}\right),\quad\mathcal{E}_{approx}\left(M';L'\right)=\frac{1}{L'}\sum_{l=1}^{L'}\epsilon_{approx,l}^{2}\left(M'\right).
\end{gather*}
Among them, the MSPE $\mathcal{E}_{pred}$ can serve as an indicator depicting the performance of the network with the weight
$\hat{\Gamma}$. However, it also depends on the type of test set.
Therefore, the following statistic will be utilized as an indicator
to gauge performance: 
\[
\mathcal{E}_{pred}\left(\hat{\Gamma}\right)=E\left[\mathcal{E}_{pred}\left(\hat{\Gamma};L'\right)\right],
\]
which is the same as $\left\langle \epsilon_{pred,l}^{2}\left(\hat{\Gamma}\right)\right\rangle _{l}$
because $\text{\ensuremath{\mathcal{E}_{pred}\left(\hat{\Gamma};L'\right)}=}\left\langle \epsilon_{pred,l}^{2}\left(\hat{\Gamma}\right)\right\rangle _{l,L'}$.
We will explain the estimation of $\mathcal{E}_{pred}\left(\hat{\Gamma}\right)$
in the later sections. The propositions below describe the expectations and variances
of $\mathcal{E}_{fit}$, $\mathcal{E}_{pred}$, and $\mathcal{E}_{approx}$.
\begin{prop}
\label{prop:expectation_of_error}The expectations of $\mathcal{E}_{fit}$,
$\mathcal{E}_{pred}$, and $\mathcal{E}_{approx}$ are given by 
\begin{gather*}
E\left[\mathcal{E}_{fit}\left(\hat{\Gamma},M';L'\right)\right]=E\left[\mathcal{E}_{approx}\left(M';L'\right)\right]+E\left[\mathcal{E}_{pred}\left(\hat{\Gamma};L'\right)\right],\\
E\left[\mathcal{E}_{approx}\left(M';L'\right)\right]=\frac{1}{M'}\left\langle \beta_{l}^{2}\right\rangle _{l,L'},\quad E\left[\mathcal{E}_{pred}\left(\hat{\Gamma};L'\right)\right]=\frac{1}{LM}\left\langle \boldsymbol{q_{l}}\boldsymbol{W}^{-1}\boldsymbol{W}^{\beta}\boldsymbol{W}^{-1}\boldsymbol{q_{l}^{T}}\right\rangle _{l,L'}.
\end{gather*}
\end{prop}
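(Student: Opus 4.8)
The plan is to reduce all three assertions to the distributional facts already recorded in (\ref{eq:err_dist}) together with the elementary identity $\epsilon_{pred,l}=\epsilon_{fit,l}+\epsilon_{approx,l}$, so that the whole argument comes down to linearity of expectation once a single cross term has been dispatched.

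First I would establish the additive identity. Rewriting the decomposition as $\epsilon_{fit,l}(\hat{\Gamma},M')=\epsilon_{pred,l}(\hat{\Gamma})-\epsilon_{approx,l}(M')$ and squaring gives $\epsilon_{fit,l}^{2}=\epsilon_{pred,l}^{2}-2\,\epsilon_{pred,l}\,\epsilon_{approx,l}+\epsilon_{approx,l}^{2}$. The key observation is that $\epsilon_{pred,l}=\sigma_{net,l}^{I}(\hat{\Gamma})-\sigma_{true,l}^{I}$ depends on the \emph{training} randomness only (through $\hat{\Gamma}$), whereas $\epsilon_{approx,l}(M')=\sigma_{approx,l}^{I}(M')-\sigma_{true,l}^{I}$ depends on the \emph{independent} test-sample simulation noise only; hence the two are independent, and since each has mean zero by (\ref{eq:err_dist}), the cross term satisfies $E[\epsilon_{pred,l}\epsilon_{approx,l}]=E[\epsilon_{pred,l}]\,E[\epsilon_{approx,l}]=0$, so $E[\epsilon_{fit,l}^{2}]=E[\epsilon_{pred,l}^{2}]+E[\epsilon_{approx,l}^{2}]$. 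Averaging over $l=1,\dots,L'$ and using $\mathcal{E}_{fit}=\langle\epsilon_{fit,l}^{2}\rangle_{l,L'}$, and likewise for $\mathcal{E}_{pred}$ and $\mathcal{E}_{approx}$, then yields $E[\mathcal{E}_{fit}]=E[\mathcal{E}_{approx}]+E[\mathcal{E}_{pred}]$.

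Next I would read off the two closed forms directly from (\ref{eq:err_dist}). Since $\epsilon_{approx,l}(M')\sim N(0,\beta_{l}^{2}/M')$ we have $E[\epsilon_{approx,l}^{2}]=\beta_{l}^{2}/M'$, and averaging over the $L'$ test points gives $E[\mathcal{E}_{approx}(M';L')]=\frac{1}{M'}\langle\beta_{l}^{2}\rangle_{l,L'}$. Similarly, from $\epsilon_{pred,l}(\hat{\Gamma})\sim N\!\left(0,\frac{1}{LM}\boldsymbol{q_{l}}\boldsymbol{W}^{-1}\boldsymbol{W}^{\beta}\boldsymbol{W}^{-1}\boldsymbol{q_{l}^{T}}\right)$ we get $E[\epsilon_{pred,l}^{2}]=\frac{1}{LM}\boldsymbol{q_{l}}\boldsymbol{W}^{-1}\boldsymbol{W}^{\beta}\boldsymbol{W}^{-1}\boldsymbol{q_{l}^{T}}$, whose average over $l$ is the stated expression. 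Throughout, expectations are taken conditionally on the randomly drawn inputs $x_{l}$, so that the $\beta_{l}$ and the rows $\boldsymbol{q_{l}}$ on the right-hand sides are the realized values rather than population means; under this convention the cross-term argument of the previous paragraph goes through verbatim, since the two noise sources are conditionally independent.

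The hard part will not be the algebra but pinning down the status of the variance formulas in (\ref{eq:err_dist}): the matrices $\boldsymbol{W}$, $\boldsymbol{W}^{\beta}$ and the row $\boldsymbol{q_{l}}$ are themselves functions of $\hat{\Gamma}$ and of the training sample, so a priori $\mathrm{Var}(\epsilon_{pred,l})$ is random and one must justify treating it as deterministic when the expectation is passed through. I would handle this in the same asymptotic spirit already used in the text for $\boldsymbol{W}$ and $\boldsymbol{W}^{\beta}$: in the regime $L,M\gg1$ the law of large numbers makes $\boldsymbol{W}$, $\boldsymbol{W}^{\beta}$ concentrate at their population versions and $\hat{\Gamma}$ is consistent for $\Gamma$, so $\boldsymbol{q_{l}}(\hat{\Gamma})$ may be replaced by $\boldsymbol{q_{l}}(\Gamma)$ up to terms of higher order in $1/\sqrt{LM}$; the quadratic form is then effectively non-random and the three identities hold to leading order, which is the sense in which Proposition \ref{prop:expectation_of_error} is intended. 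With that caveat settled, the remaining bookkeeping — expanding the square, invoking independence, and summing — is routine.
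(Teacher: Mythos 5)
Your proposal is correct and follows essentially the same route as the paper: both arguments read the required second moments off the normal laws in (\ref{eq:err_dist}) and conclude by linearity of expectation over $l=1,\dots,L'$. The only (minor) difference is that you re-derive the additive identity by expanding $\left(\epsilon_{pred,l}-\epsilon_{approx,l}\right)^{2}$ and killing the cross term via independence of the training and test noise, whereas the paper obtains it for free because the variance of $\epsilon_{fit,l}$ recorded in (\ref{eq:err_dist}) is already written as the sum of the other two variances.
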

\begin{proof}
As $E\left[\epsilon_{fit,l}\right]=0$ for all $l$,
\begin{align*}
E\left[\mathcal{E}_{fit}\left(\hat{\Gamma},M';L'\right)\right] & =\frac{1}{L'}\sum_{l=1}^{L'}Var\left[\epsilon_{fit,l}\left(\hat{\Gamma},M'\right)\right]\\
 & =\frac{1}{L'}\sum_{l=1}^{L'}\left(\frac{\beta_{l}^{2}}{M'}+\frac{1}{LM}\boldsymbol{q_{l}}\boldsymbol{W}^{-1}\boldsymbol{W}^{\beta}\boldsymbol{W}^{-1}\boldsymbol{q_{l}^{T}}\right)\\
 & =\frac{1}{M'}\left\langle \beta_{l}^{2}\right\rangle _{l,L'}+\frac{1}{LM}\left\langle \boldsymbol{q_{l}}\boldsymbol{W}^{-1}\boldsymbol{W}^{\beta}\boldsymbol{W}^{-1}\boldsymbol{q_{l}^{T}}\right\rangle _{l,L'}.
\end{align*}
Similarly, $E\left[\mathcal{E}_{approx}\right]$ and $E\left[\mathcal{E}_{pred}\right]$
are calculated as $\frac{1}{M'}\left\langle \beta_{l}^{2}\right\rangle _{l,L'}$
and $\frac{1}{LM}\left\langle \boldsymbol{q_{l}}\boldsymbol{W}^{-1}\boldsymbol{W}^{\beta}\boldsymbol{W}^{-1}\boldsymbol{q_{l}^{T}}\right\rangle _{l,L'}$,
respectively. 
\end{proof}
\begin{prop}
\label{prop:variance_of_error}The variances of $\mathcal{E}_{fit}$,
$\mathcal{E}_{pred}$, and $\mathcal{E}_{approx}$ are
\begin{align*}
Var\left[\mathcal{E}_{fit}\left(\hat{\Gamma},M';L'\right)\right] & =Var\left[\mathcal{E}_{approx}\left(M';L'\right)\right]+Var\left[\mathcal{E}_{pred}\left(\hat{\Gamma};L'\right)\right]+\frac{4}{L'LM'M}\left\langle \beta_{l}^{2}\boldsymbol{q_{l}}\boldsymbol{W}^{-1}\boldsymbol{W}^{\beta}\boldsymbol{W}^{-1}\boldsymbol{q_{l}^{T}}\right\rangle _{l,L'},
\end{align*}
\[
Var\left[\mathcal{E}_{approx}\left(M';L'\right)\right]=\frac{2}{L'\left(M'\right)^{2}}\left\langle \beta_{l}^{4}\right\rangle _{l,L'},\quad Var\left[\mathcal{E}_{pred}\left(\hat{\Gamma};L'\right)\right]=\frac{2}{L'L^{2}M^{2}}\left\langle \left(\boldsymbol{q_{l}}\boldsymbol{W}^{-1}\boldsymbol{W}^{\beta}\boldsymbol{W}^{-1}\boldsymbol{q_{l}^{T}}\right)^{2}\right\rangle _{l,L'}.
\]
\end{prop}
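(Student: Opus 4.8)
The plan is to get all three variances directly from the Gaussian laws recorded in \eqref{eq:err_dist}, using only two elementary moment facts: for $X\sim N(0,v)$ one has $Var[X^{2}]=E[X^{4}]-v^{2}=2v^{2}$, and for jointly Gaussian zero-mean $X,Y$ one has $Cov[X^{2},Y^{2}]=2(E[XY])^{2}$. As in the proof of Proposition~\ref{prop:expectation_of_error}, I would work conditionally on the test inputs $x_{1},\dots,x_{L'}$, so the $\beta_{l}$ and the rows $\boldsymbol{q_{l}}$ of $\boldsymbol{Q}$ are fixed; the remaining randomness then splits into the test-set simulation noise, which is carried by the $\epsilon_{approx,l}(M')$, and the training-set noise, which is carried through $\hat{\Gamma}$ into the $\epsilon_{pred,l}(\hat{\Gamma})$, and these two sources are independent of each other. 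Writing each mean squared error as $\mathcal{E}_{\cdot}=\frac{1}{L'}\sum_{l=1}^{L'}\epsilon_{\cdot,l}^{2}$ reduces every variance to the double sum $\frac{1}{(L')^{2}}\sum_{l,l'}Cov[\epsilon_{\cdot,l}^{2},\epsilon_{\cdot,l'}^{2}]$.

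For $\mathcal{E}_{approx}$ the $\epsilon_{approx,l}(M')$ are genuinely independent across $l$, being built from disjoint blocks of $M'$ simulations, so only the diagonal survives and $Var[\mathcal{E}_{approx}]=\frac{1}{(L')^{2}}\sum_{l}2(\beta_{l}^{2}/M')^{2}=\frac{2}{L'(M')^{2}}\left\langle \beta_{l}^{4}\right\rangle _{l,L'}$. For $\mathcal{E}_{pred}$ the same bookkeeping, with $v_{l}$ denoting the per-sample prediction variance $\frac{1}{LM}\boldsymbol{q_{l}}\boldsymbol{W}^{-1}\boldsymbol{W}^{\beta}\boldsymbol{W}^{-1}\boldsymbol{q_{l}^{T}}$ of \eqref{eq:err_dist}, gives $Var[\mathcal{E}_{pred}]=\frac{1}{(L')^{2}}\sum_{l}2v_{l}^{2}=\frac{2}{L'L^{2}M^{2}}\left\langle \left(\boldsymbol{q_{l}}\boldsymbol{W}^{-1}\boldsymbol{W}^{\beta}\boldsymbol{W}^{-1}\boldsymbol{q_{l}^{T}}\right)^{2}\right\rangle _{l,L'}$ once the off-diagonal terms are discarded. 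For $\mathcal{E}_{fit}$ I would avoid squaring $\epsilon_{fit,l}$ directly and instead use $\epsilon_{fit,l}=\epsilon_{pred,l}-\epsilon_{approx,l}$ to write $\mathcal{E}_{fit}=\mathcal{E}_{pred}+\mathcal{E}_{approx}-C$ with $C=\frac{2}{L'}\sum_{l}\epsilon_{pred,l}\epsilon_{approx,l}$, and then expand the variance. Independence of the two noise sources, together with $E[\epsilon_{approx,l}]=E[\epsilon_{pred,l}]=0$, kills $Cov[\mathcal{E}_{pred},\mathcal{E}_{approx}]$, $Cov[\mathcal{E}_{pred},C]$, and $Cov[\mathcal{E}_{approx},C]$, leaving $Var[\mathcal{E}_{fit}]=Var[\mathcal{E}_{pred}]+Var[\mathcal{E}_{approx}]+Var[C]$; and $Var[C]=\frac{4}{(L')^{2}}\sum_{l,l'}E[\epsilon_{pred,l}\epsilon_{pred,l'}]\,E[\epsilon_{approx,l}\epsilon_{approx,l'}]$ collapses, since $E[\epsilon_{approx,l}\epsilon_{approx,l'}]=\delta_{ll'}\beta_{l}^{2}/M'$, to $\frac{4}{(L')^{2}M'}\sum_{l}v_{l}\beta_{l}^{2}=\frac{4}{L'LM'M}\left\langle \beta_{l}^{2}\boldsymbol{q_{l}}\boldsymbol{W}^{-1}\boldsymbol{W}^{\beta}\boldsymbol{W}^{-1}\boldsymbol{q_{l}^{T}}\right\rangle _{l,L'}$, the cross term in the statement. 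Notably this last part, and the identity $Var[\mathcal{E}_{fit}]=Var[\mathcal{E}_{pred}]+Var[\mathcal{E}_{approx}]+Var[C]$ itself, never use independence of the $\epsilon_{pred,l}$ across $l$.

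The step I expect to be the real crux is the deletion of $\sum_{l\neq l'}Cov[\epsilon_{pred,l}^{2},\epsilon_{pred,l'}^{2}]$ required for the explicit form of $Var[\mathcal{E}_{pred}]$ (and hence for the occurrence of $Var[\mathcal{E}_{pred}]$ on the right-hand side of the $\mathcal{E}_{fit}$ identity). Because all $L'$ test predictions share the single estimator $\hat{\Gamma}$, under the linearization $\epsilon_{pred,l}\approx\boldsymbol{q_{l}}(\hat{\Gamma}-\Gamma)$ of Section~\ref{sec:theorem} the prediction errors are jointly Gaussian with $Cov[\epsilon_{pred,l},\epsilon_{pred,l'}]=\frac{1}{LM}\boldsymbol{q_{l}}\boldsymbol{W}^{-1}\boldsymbol{W}^{\beta}\boldsymbol{W}^{-1}\boldsymbol{q_{l'}^{T}}$, so $Cov[\epsilon_{pred,l}^{2},\epsilon_{pred,l'}^{2}]$ is nonnegative and does not cancel on summation. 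I would therefore not attempt to prove these off-diagonal contributions negligible; instead I would state explicitly, as a modelling hypothesis --- the same one already implicit in the per-sample laws \eqref{eq:err_dist} --- that the test errors $\epsilon_{pred,l}$ are treated as mutually independent across $l$, and derive the closed form of $Var[\mathcal{E}_{pred}]$ under it. Identifying this as the load-bearing assumption is, I think, the only genuinely delicate point; granted it, the rest is the routine Gaussian-moment bookkeeping sketched above.
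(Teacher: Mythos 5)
Your derivation is correct and lands on exactly the stated formulas, but it is organized differently from the paper's proof, and the comparison is instructive. The paper works with $\epsilon_{fit,l}$ directly: it observes that $\epsilon_{fit,l}^{2}$ is gamma-distributed $\Gamma\bigl(1/2,\,2(\beta_{l}^{2}/M'+v_{l})\bigr)$, reads off $Var[\epsilon_{fit,l}^{2}]=2(\beta_{l}^{2}/M'+v_{l})^{2}$, asserts that ``the $\epsilon_{fit,l}$ are independent,'' sums over $l$, and obtains the three terms by expanding the square; the cross term thus appears as an algebraic by-product rather than as the variance of an identifiable object. You instead decompose $\mathcal{E}_{fit}=\mathcal{E}_{pred}+\mathcal{E}_{approx}-C$ and compute $Var[C]$ via Isserlis, which buys two things: the cross term is exhibited as the variance of the empirical covariance $C$ between prediction and approximation errors, and --- more importantly --- you isolate precisely where independence across $l$ is load-bearing. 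This is a genuine service: the paper's blanket claim that the $\epsilon_{fit,l}$ are independent is not innocuous, because all $L'$ test points share the single estimator $\hat{\Gamma}$, so under the linearization $\epsilon_{pred,l}\approx\boldsymbol{q_{l}}(\hat{\Gamma}-\Gamma)$ the off-diagonal covariances $\frac{1}{LM}\boldsymbol{q_{l}}\boldsymbol{W}^{-1}\boldsymbol{W}^{\beta}\boldsymbol{W}^{-1}\boldsymbol{q_{l'}^{T}}$ are generically nonzero and contribute nonnegative terms $2\bigl(Cov[\epsilon_{pred,l},\epsilon_{pred,l'}]\bigr)^{2}$ to $Var[\mathcal{E}_{pred}]$; your observation that $Var[C]$ and the additive split of $Var[\mathcal{E}_{fit}]$ survive without that assumption, while the closed form of $Var[\mathcal{E}_{pred}]$ does not, is exactly right and is stated nowhere in the paper. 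The only caveat is that by flagging independence as a modelling hypothesis you are being more conservative than the authors, not less rigorous; one could alternatively argue the off-diagonal sum is $O(1)$ versus the diagonal's $O(L')$ after division by $(L')^{2}$ only under extra conditions on the $\boldsymbol{q_{l}}$, so declaring the hypothesis explicitly, as you do, is the honest resolution.
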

\begin{proof}
The square $X^{2}$ of a normal random variable $X\sim N\left(0,\sigma^{2}\right)$
follows a gamma distribution $\Gamma\left(1/2,2\sigma^{2}\right)$,
which leads to
\[
\epsilon_{fit,l}^{2}\left(\hat{\Gamma},M'\right)\sim\Gamma\left(\frac{1}{2},2\left(\frac{\beta_{l}^{2}}{M'}+\frac{1}{LM}\boldsymbol{q_{l}}\boldsymbol{W}^{-1}\boldsymbol{W}^{\beta}\boldsymbol{W}^{-1}\boldsymbol{q_{l}^{T}}\right)\right).
\]
As $E\left[Y\right]=ab$ and $Var\left[Y\right]=ab^{2}$ for
$Y\sim\Gamma\left(a,b\right)$, $E\left[\epsilon_{fit,l}^{2}\right]=\frac{\beta_{l}^{2}}{M'}+\frac{1}{LM}\boldsymbol{q_{l}}\boldsymbol{W}^{-1}\boldsymbol{W}^{\beta}\boldsymbol{W}^{-1}\boldsymbol{q_{l}^{T}}$,
and $Var\left[\epsilon_{fit,l}^{2}\right]=2\left(\frac{\beta_{l}^{2}}{M'}+\frac{1}{LM}\boldsymbol{q_{l}}\boldsymbol{W}^{-1}\boldsymbol{W}^{\beta}\boldsymbol{W}^{-1}\boldsymbol{q_{l}^{T}}\right)^{2}$.
As $\epsilon_{fit,l}$ are independent, 
\begin{alignat*}{1}
 & Var\left[\mathcal{E}_{fit}\left(\hat{\Gamma},M';L'\right)\right]\\
 & =\frac{1}{\left(L'\right)^{2}}\sum_{l=1}^{L'}Var\left[\epsilon_{fit,l}^{2}\left(\hat{\Gamma}_{M},M'\right)\right]\\
 & =\frac{1}{\left(L'\right)^{2}}\sum_{l=1}^{L'}\left(\frac{2\beta_{l}^{4}}{\left(M'\right)^{2}}+\frac{2}{L^{2}M^{2}}\left(\boldsymbol{q_{l}}\boldsymbol{W}^{-1}\boldsymbol{W}^{\beta}\boldsymbol{W}^{-1}\boldsymbol{q_{l}^{T}}\right)^{2}+\frac{4\beta_{l}^{2}}{LM'M}\boldsymbol{q_{l}}\boldsymbol{W}^{-1}\boldsymbol{W}^{\beta}\boldsymbol{W}^{-1}\boldsymbol{q_{l}^{T}}\right)\\
 & =\frac{2}{L'\left(M'\right)^{2}}\left\langle \beta_{l}^{4}\right\rangle _{l,L'}+\frac{2}{L'L^{2}M^{2}}\left\langle \left(\boldsymbol{q_{l}}\boldsymbol{W}^{-1}\boldsymbol{W}^{\beta}\boldsymbol{W}^{-1}\boldsymbol{q_{l}^{T}}\right)^{2}\right\rangle _{l,L'}+\frac{4}{L'LM'M}\left\langle \beta_{l}^{2}\boldsymbol{q_{l}}\boldsymbol{W}^{-1}\boldsymbol{W}^{\beta}\boldsymbol{W}^{-1}\boldsymbol{q_{l}^{T}}\right\rangle _{l,L'}.
\end{alignat*}
Similarly, $Var\left[\mathcal{E}_{approx}\right]$ and $Var\left[\mathcal{E}_{pred}\right]$
are derived as $\frac{2}{L'\left(M'\right)^{2}}\left\langle \beta_{l}^{4}\right\rangle _{l,L'}$
and $\frac{2}{L'L^{2}M^{2}}\left\langle \left(\boldsymbol{q_{l}}\boldsymbol{W}^{-1}\boldsymbol{W}^{\beta}\boldsymbol{W}^{-1}\boldsymbol{q_{l}^{T}}\right)^{2}\right\rangle _{l,L'}$,
respectively. 
\end{proof}
Based on the propositions, the following theorem suggests an unbiased
and consistent estimator of $\mathcal{E}_{pred}\left(\hat{\Gamma}\right)$.
The theorem needs two distinct test sets.
\begin{thm}
\label{thm:main_theorem}The estimator 
\[
\hat{\mathcal{E}}_{pred}\left(\hat{\Gamma}\right)=\frac{M_{1}'\mathcal{E}_{fit}\left(M_{1}';L_{1}'\right)-M_{2}'\mathcal{E}_{fit}\left(M_{2}';L_{2}'\right)}{M_{1}'-M_{2}'}
\]
is unbiased and consistent to $\mathcal{E}_{pred}\left(\hat{\Gamma}\right)$
for $M_{1}'\neq M_{2}'$. Particularly, $\left(L_{1}',L_{2}'\right)$ and
$\left(M_{1}',M_{2}'\right)$ are the data lengths and the numbers
of simulations for two distinct test sets, respectively. Further,
the variance of $\hat{\mathcal{E}}_{pred}\left(\hat{\Gamma}\right)$
is given by
\begin{equation}
Var\left[\hat{\mathcal{E}}_{pred}\left(\hat{\Gamma}\right)\right]=\left(\frac{M_{1}'}{M_{1}'-M_{2}'}\right)^{2}Var\left[\mathcal{E}_{fit}\left(\hat{\Gamma}_{M},M_{1}';L_{1}'\right)\right]+\left(\frac{M_{2}'}{M_{1}'-M_{2}'}\right)^{2}Var\left[\mathcal{E}_{fit}\left(\hat{\Gamma}_{M},M_{2}';L_{2}'\right)\right],\label{eq:var_pred}
\end{equation}
where
\[
Var\left[\mathcal{E}_{fit}\left(\hat{\Gamma},M';L'\right)\right]=\frac{2\left\langle \beta_{l}^{4}\right\rangle _{l,L'}}{L'\left(M'\right)^{2}}+\frac{2\left\langle \left(\boldsymbol{q_{l}}\boldsymbol{W}^{-1}\boldsymbol{W}^{\beta}\boldsymbol{W}^{-1}\boldsymbol{q_{l}^{T}}\right)^{2}\right\rangle _{l,L'}}{L'L^{2}M^{2}}+\frac{4\left\langle \beta_{l}^{2}\boldsymbol{q_{l}}\boldsymbol{W}^{-1}\boldsymbol{W}^{\beta}\boldsymbol{W}^{-1}\boldsymbol{q_{l}^{T}}\right\rangle _{l,L'}}{L'LM'M}.
\]
\end{thm}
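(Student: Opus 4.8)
The plan is to reduce everything to Propositions \ref{prop:expectation_of_error} and \ref{prop:variance_of_error} by exploiting the fact that, after averaging over the test set, $M'\,\mathcal{E}_{fit}$ depends \emph{affinely} on $M'$. First I would prove unbiasedness. Conditioning on $\hat{\Gamma}$ and taking expectations over the test set, Proposition \ref{prop:expectation_of_error} together with the identity $E\!\left[\langle\varphi_l\rangle_{l,L'}\right]=\langle\varphi_l\rangle_l$ noted above gives
\[
M'\,E\!\left[\mathcal{E}_{fit}\left(\hat{\Gamma},M';L'\right)\right]=\left\langle \beta_l^2\right\rangle_l+\frac{M'}{LM}\left\langle \boldsymbol{q_l}\boldsymbol{W}^{-1}\boldsymbol{W}^\beta\boldsymbol{W}^{-1}\boldsymbol{q_l^T}\right\rangle_l ,
\]
an affine function of $M'$ whose intercept is the nuisance contribution coming from $\mathcal{E}_{approx}$ and whose slope is exactly $\mathcal{E}_{pred}(\hat{\Gamma})=E\!\left[\mathcal{E}_{pred}(\hat{\Gamma};L')\right]$. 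The numerator $M_1'\mathcal{E}_{fit}(M_1';L_1')-M_2'\mathcal{E}_{fit}(M_2';L_2')$ is precisely a first difference in $M'$: in expectation the intercept $\langle\beta_l^2\rangle_l$ cancels, leaving $(M_1'-M_2')\,\mathcal{E}_{pred}(\hat{\Gamma})$, and dividing by $M_1'-M_2'\neq0$ yields $E\!\left[\hat{\mathcal{E}}_{pred}(\hat{\Gamma})\right]=\mathcal{E}_{pred}(\hat{\Gamma})$.

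For the variance I would use that the two test sets are drawn independently, so $\hat{\mathcal{E}}_{pred}(\hat{\Gamma})$ is a fixed linear combination of two independent random variables and its variance splits as
\[
Var\!\left[\hat{\mathcal{E}}_{pred}(\hat{\Gamma})\right]=\left(\frac{M_1'}{M_1'-M_2'}\right)^{2}Var\!\left[\mathcal{E}_{fit}\left(\hat{\Gamma},M_1';L_1'\right)\right]+\left(\frac{M_2'}{M_1'-M_2'}\right)^{2}Var\!\left[\mathcal{E}_{fit}\left(\hat{\Gamma},M_2';L_2'\right)\right],
\]
which is \eqref{eq:var_pred}. Substituting the explicit expression for $Var\!\left[\mathcal{E}_{fit}\left(\hat{\Gamma},M';L'\right)\right]$ obtained in the proof of Proposition \ref{prop:variance_of_error} — the sum of the $\langle\beta_l^4\rangle_{l,L'}$ term, the $\left\langle\left(\boldsymbol{q_l}\boldsymbol{W}^{-1}\boldsymbol{W}^\beta\boldsymbol{W}^{-1}\boldsymbol{q_l^T}\right)^2\right\rangle_{l,L'}$ term, and the cross term — then gives the variance formula stated in the theorem verbatim.

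Consistency then follows immediately: every term in $Var\!\left[\mathcal{E}_{fit}\left(\hat{\Gamma},M_i';L_i'\right)\right]$ carries an overall factor $1/L_i'$, so $Var\!\left[\hat{\mathcal{E}}_{pred}(\hat{\Gamma})\right]=O(1/L_1')+O(1/L_2')\to0$ as $L_1',L_2'\to\infty$; combined with unbiasedness this is mean-square convergence, hence convergence in probability, to $\mathcal{E}_{pred}(\hat{\Gamma})$.

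The step I expect to require the most care is the bookkeeping around sample versus population averages. Proposition \ref{prop:expectation_of_error} is phrased with sample means $\langle\cdot\rangle_{l,L_i'}$ over the two finite test sets, and these differ between the sets, so the intercept $\langle\beta_l^2\rangle$ cancels exactly only after passing to population means via $E\!\left[\langle\cdot\rangle_{l,L'}\right]=\langle\cdot\rangle_l$ (equivalently, invoking $L_i'\gg1$ as the text does when asserting that $\boldsymbol{W}$ and $\boldsymbol{W}^\beta$ "rarely change"). One must likewise keep $\boldsymbol{W}$, $\boldsymbol{W}^\beta$, and each $\boldsymbol{q_l}$ fixed given $\hat{\Gamma}$ and the training data, so that the only randomness entering the variance computation is the simulation noise on the two test sets — exactly the regime in which Proposition \ref{prop:variance_of_error} was derived — and verify that the two independent test sets make the covariance cross-term between $\mathcal{E}_{fit}(M_1';L_1')$ and $\mathcal{E}_{fit}(M_2';L_2')$ vanish.
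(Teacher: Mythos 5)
Your proposal is correct and follows essentially the same route as the paper's proof: cancel the $\left\langle \beta_{l}^{2}\right\rangle$ intercept via the first difference in $M'$ to get unbiasedness, split the variance using independence of the two test sets, and conclude consistency from the $1/L_i'$ decay of the variances. If anything, you are more explicit than the paper about the one delicate point — the passage from the sample means $\left\langle \cdot\right\rangle _{l,L_i'}$ of Proposition \ref{prop:expectation_of_error} to the common population mean $\left\langle \cdot\right\rangle _{l}$ via $E\left[\left\langle \cdot\right\rangle _{l,L'}\right]=\left\langle \cdot\right\rangle _{l}$, which the paper performs silently in the second line of its unbiasedness computation.
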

\begin{proof}
The unbiasedness of $\hat{\mathcal{E}}_{pred}\left(\hat{\Gamma}\right)$
in relation to $\mathcal{E}_{pred}\left(\hat{\Gamma}\right)$ is exhibited as follows:
\begin{align*}
E\left[\hat{\mathcal{E}}_{pred}\left(\hat{\Gamma}\right)\right] & =\frac{M_{1}'E\left[\mathcal{E}_{fit}\left(M_{1}';L_{1}'\right)\right]-M_{2}'E\left[\mathcal{E}_{fit}\left(M_{2}';L_{2}'\right)\right]}{M_{1}'-M_{2}'}\\
 & =\frac{\left(\left\langle \beta_{l}^{2}\right\rangle _{l}+\frac{M_{1}'}{LM}\left\langle \boldsymbol{q_{l}}\boldsymbol{W}^{-1}\boldsymbol{W}^{\beta}\boldsymbol{W}^{-1}\boldsymbol{q_{l}^{T}}\right\rangle _{l}\right)-\left(\left\langle \beta_{l}^{2}\right\rangle _{l}+\frac{M_{2}'}{LM}\left\langle \boldsymbol{q_{l}}\boldsymbol{W}^{-1}\boldsymbol{W}^{\beta}\boldsymbol{W}^{-1}\boldsymbol{q_{l}^{T}}\right\rangle _{l}\right)}{M_{1}'-M_{2}'}\\
 & =\frac{1}{LM}\left\langle \boldsymbol{q_{l}}\boldsymbol{W}^{-1}\boldsymbol{W}^{\beta}\boldsymbol{W}^{-1}\boldsymbol{q_{l}^{T}}\right\rangle _{l}=E\left[\mathcal{E}_{pred}\left(\hat{\Gamma};L'\right)\right]=\mathcal{E}_{pred}\left(\hat{\Gamma}\right).
\end{align*}
On the other hand, the form (\ref{eq:var_pred}) for $Var\left[\mathcal{E}_{pred}\left(\hat{\Gamma}\right)\right]$
is easily derived because $\mathcal{E}_{fit}\left(M_{1}';L_{1}'\right)$
and $\mathcal{E}_{fit}\left(M_{2}';L_{2}'\right)$ are independent
of each other. Subsequently, by Proposition \ref{prop:variance_of_error}, because $Var\left[\mathcal{E}_{fit}\left(M_{1}';L_{1}'\right)\right]$
and $Var\left[\mathcal{E}_{fit}\left(M_{2}';L_{2}'\right)\right]$
converge to $0$ as $L_{1}$ and $L_{2}$ go to infinity, $\hat{\mathcal{E}}_{pred}\left(\hat{\Gamma}\right)$ is consistent.
\end{proof}
When considering the theorem above, $\left(M_{1}',M_{2}'\right)$ and
$\left(L_{1}',L_{2}'\right)$ should be set as $M_{1}'\gg M_{2}'$,
$L_{1}'\gg1$, and $L_{2}'\gg1$ to precisely estimate $\mathcal{E}_{pred}\left(\hat{\Gamma}\right)$. 

\section{Data generation for network learning under the SABR model}

\subsection{the SABR model}

\begin{figure}
\centering{}\subfloat[S\&P500]{\centering{}\includegraphics[scale=0.7]{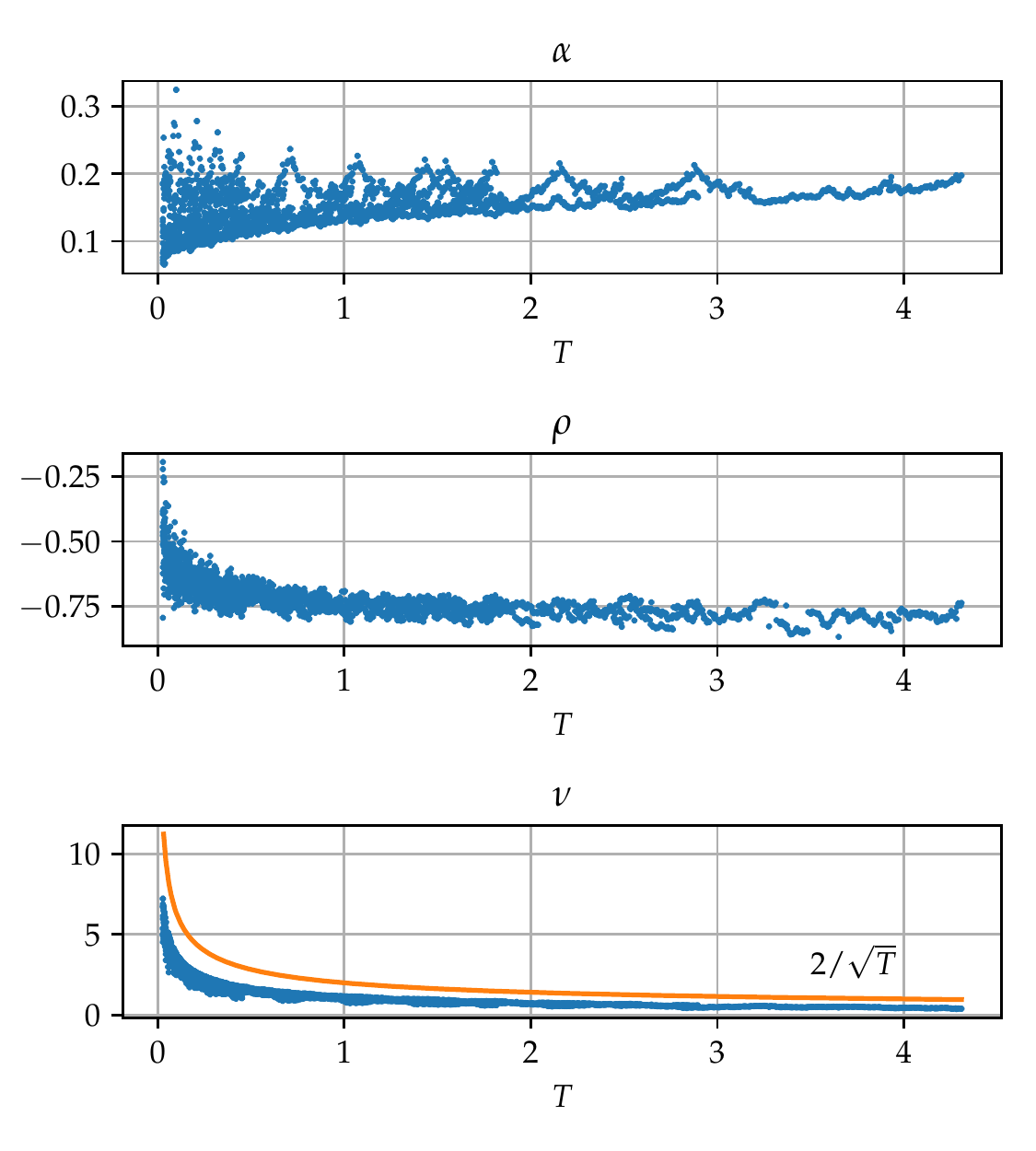}}
\subfloat[KOSPI]{\centering{}\includegraphics[scale=0.7]{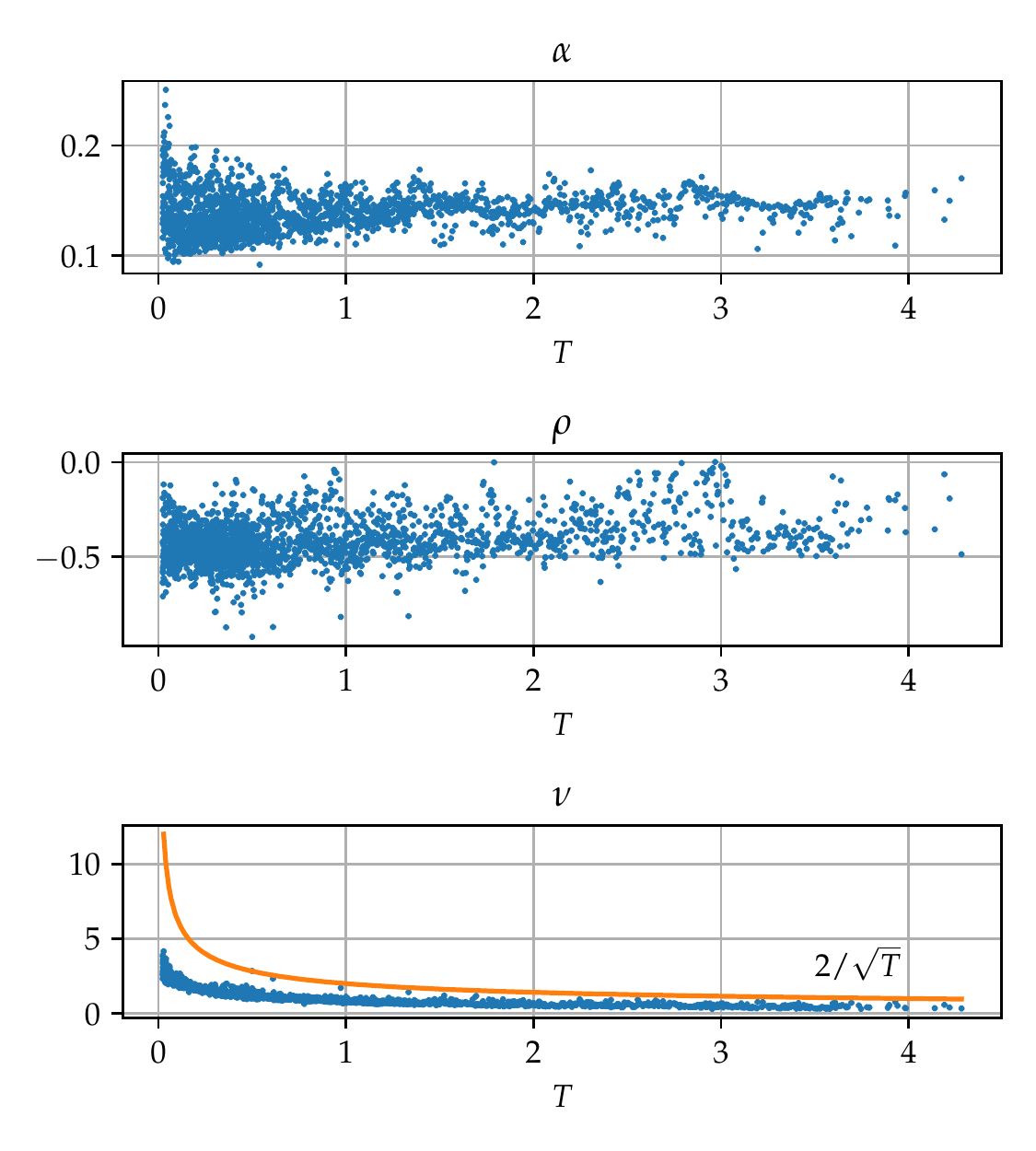}}\caption{\label{fig:raw_param}This figure shows the estimates of $\alpha_{0}(T)$, $\nu(T)$, and $\rho(T)$ for the SABR model where $\beta(T)=1$. The values are obtained using the option data for the S\&P 500 (left) and the KOSPI 200 (right) from April 2018 to March 2019.}
\end{figure}

The SABR model \citep{hagan2002managing} is expressed as the following stochastic differential equation (SDE):
\begin{gather*}
df_{t}=\alpha_{t}f_{t}^{\beta}dW_{t},\\
d\alpha_{t}=\nu\alpha_{t}dZ_{t},
\end{gather*}
where $f_{t}$ is the forward price of an underlying asset (i.e.,
stock and interest rate) at time $t$, and $W_{t}$ and $Z_{t}$ are
Brownian motions correlated with $\rho\in\left(-1,1\right)$. The
hidden state $\alpha_{t}$ and the parameters $\beta$, $\rho$, $\nu$
of the model have their respective roles in determining the shapes
of implied volatility surface $\sigma^{I}\left(T,K\right)$ (see \citet{rebonato2011sabr} for a more detailed explanation). The state $\alpha_{t}$ forms the backbone of the surface because the change of $\alpha_{t}$ causes a parallel shift upward of the surface. The volatility of volatility parameter $\nu$ handles the wings of the volatility surface
because it controls the curvature of the surface. Conversely, the elasticity $\beta$ and the correlation $\rho$ play similar roles in adjusting the slopes of skews on the surface. Thus, $\beta$ is commonly fixed as a constant from $0$ to $1$ to reduce model complexity. Aesthetic considerations result in $\beta=0$, $\beta=1/2$, and $\beta=1$, which are called the normal SABR, CIR (named after Cox, Ingersoll, and Ross) SABR, and the log-normal SABR, respectively. It is known that such
arbitrary choices of $\beta$ hardly ever decrease the fitting performance
of the SABR model \citep{west2005calibration,rebonato2011sabr}.
Likewise, \citet{bartlett2006hedging} developed a hedging method
less sensitive to particular values of $\beta$. From these studies,
we choose the log-normal SABR ($\beta=1$) so that $\alpha_{t}$ can
be regarded as the volatility of $f_{t}$.

The SABR model is mostly utilized as a fitting model to market volatilities
by maturity, for which the state $\alpha_{0}$ and parameters $\beta$,
$\nu$, and $\rho$ are usually parameterized as $\alpha_{0}(T)$, $\beta(T)$,
$\nu(T)$, and $\rho(T)$. Figure \ref{fig:raw_param} displays the estimates
of $\alpha_{0}(T)$, $\nu(T)$, and $\rho(T)$ when $\beta(T)=1$.
These are derived utilizing the option data for the S\&P 500 (left) and
the KOSPI 200 (right) from April 2018 to March 2019. The calibration
is performed by Korean Asset Pricing, a bond rating agency located
in Korea. From the figure, one can observe that all $\alpha_{0}$
and $\rho$ belong to $(0.01,0.5)$ and $(-0.99,0.1)$, respectively,
and all of $\nu$ are lower than the baseline $2/\sqrt{T}$. Moreover,
it seems that the values tend to become more unstable as $T$ gets
shorter, particularly for $\nu$. This association may result because the SABR model ignores short-term events such as fast-mean-reverting volatility. 

Let us assume a fair price $c$ for a vanilla option under the SABR
model. Under the risk-neutral pricing framework \citep{shreve2004stochastic},
we can induce the pricing formula by solving the integral 
\begin{align*}
c\left(t,K\right) & =\int_{0}^{\infty}q\left(f_{T};K\right)p_{t}\left(f_{T}\right)df_{T},
\end{align*}
or the partial differential equation (PDE)
\begin{gather*}
\frac{\partial c}{\partial t}=\frac{1}{2}\alpha^{2}f^{2\beta}\frac{\partial^{2}c}{\partial f^{2}}+\frac{1}{2}\nu^{2}\alpha^{2}\frac{\partial^{2}c}{\partial\alpha^{2}}+\rho\nu\alpha^{2}f^{\beta}\frac{\partial^{2}c}{\partial f\partial\alpha},\\
c\left(T,K\right)=q\left(f_{T};K\right),
\end{gather*}
where $p_{t}$ is the density of $f_{T}$, $K$ and $T$ are the strike
and maturity of the option, respectively, and $q$ $\left(\cdot;K\right)$
is the payoff function of the option with $K$. Regrettably, any exact
pricing formulas cannot be derived for the option because the integral
and the PDE are fairly hard to solve analytically. Instead, it
is possible to derive a wide range of approximate formulas for the
implied volatilities of the options \citep{hagan2002managing,obloj2008fine,henry2008analysis,paulot2009asymptotic,wu2012series,antonov2013sabr}.
In fact, the aforementioned notion explains why the SABR model is so popular in practice. 

\begin{figure}
\centering{}\includegraphics[scale=0.75]{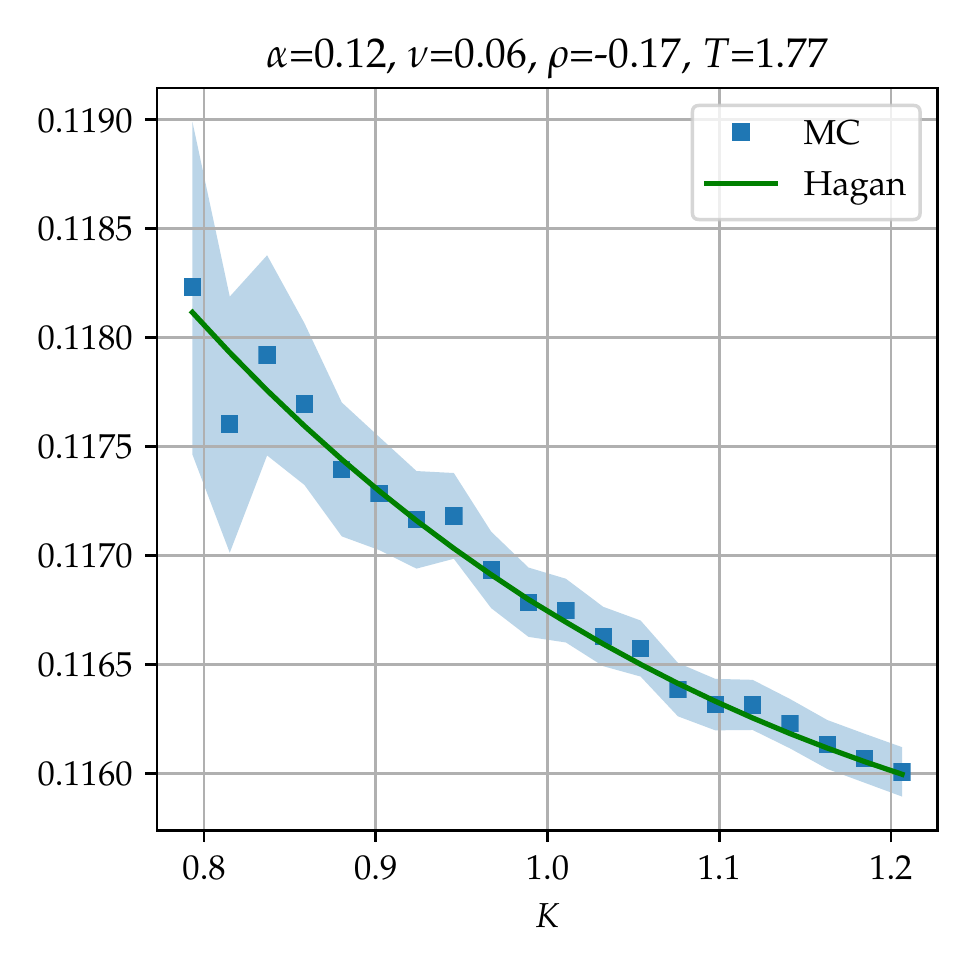}$\qquad$\includegraphics[scale=0.75]{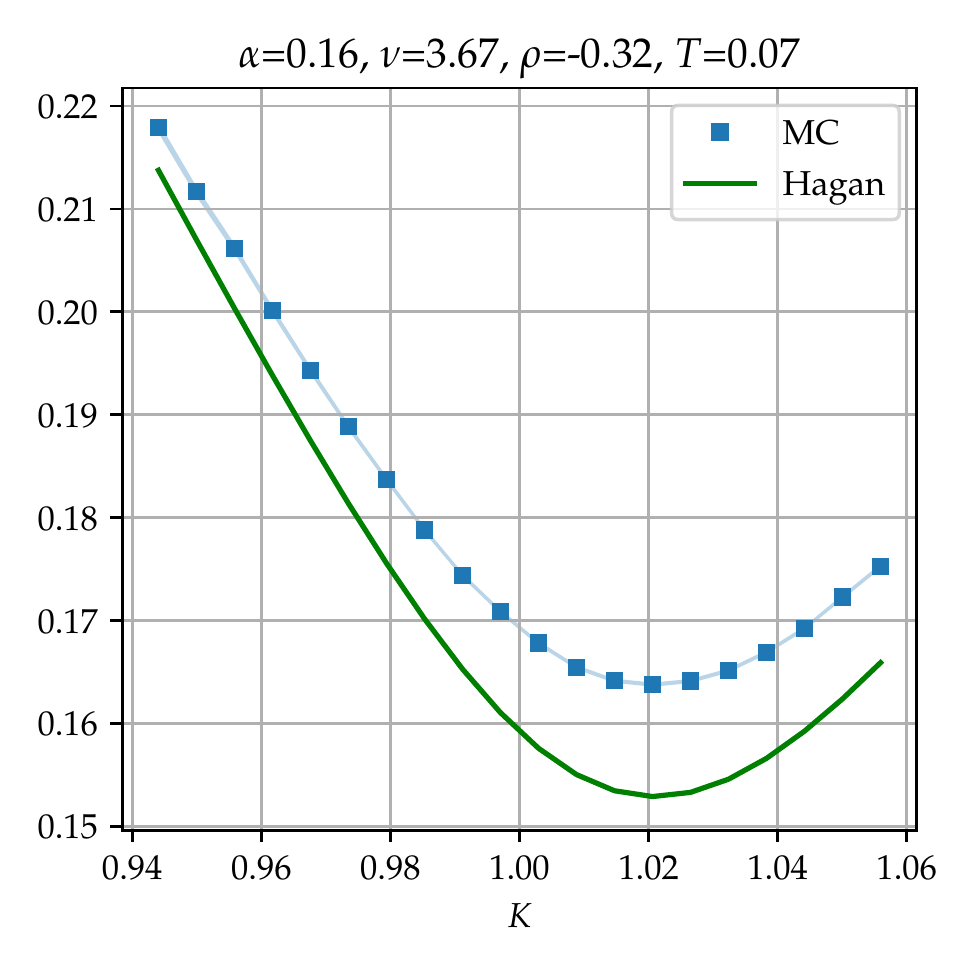}\caption{\label{fig:Hagan}These figures are drawn to compare the implied volatilities provided by the Hagan formula (\ref{eq:Hagan}) with the values of the MC. For the MC, we simulate 10 million paths for the time interval 0.002.
The blue region indicates the 99\% confidence interval for the MC.}
\end{figure}

Additionally, only for the case $\beta=1$, we briefly mention
the most well-known asymptotic formulas for the implied volatility
$\sigma^{I}$, which was found by \citet{hagan2002managing} as follows:
\begin{align}
\sigma^{I}\left(T,K\right) & \approx\alpha_{0}\frac{z}{\chi\left(z\right)}\left\{ 1+\left[\frac{1}{4}\rho\alpha_{0}\nu+\frac{1}{24}\left(2-3\rho^{2}\right)\nu^{2}\right]T\right\} ,\label{eq:Hagan}
\end{align}
where 
\begin{gather*}
z=\frac{\nu}{\alpha_{0}}\log\frac{f_{0}}{K},\quad\chi\left(z\right)=\log\left\{ \frac{\sqrt{1-2\rho z+z^{2}}+z-\rho}{1-\rho}\right\} .
\end{gather*}
As the above equations are derived using an asymptotic technique,
it can be applied only for the option with a short maturity $T$ and
a strike $K$ close to $f_{0}$. If one of the assumptions does not
hold, this formula yields inexact prices, but even if all of the assumptions are satisfied, it does not always provide consistent accuracy. Figure \ref{fig:Hagan} compares the implied volatilities formula
with the MC-based values. Its accuracy is similar to that of the MC even when $T$ is large (left), and vice versa (right).
For the MC simulation, we simulate 10 million paths for the time
interval 0.002. The blue region indicates the 99\% confidence interval
for the MC. Nevertheless, formula (\ref{eq:Hagan}) is still popular
due to its simplicity, especially in global over-the-counter interest
rate derivatives market.

\subsection{Data generation for network learning}

In this subsection, we explain the method to generate extensive approximate
implied volatilities $\sigma_{approx,l}^{I}$ for exact implied volatilities
$\sigma_{true,l}^{I}$ $\left(l=1,\cdots,L\right)$ under the SABR
model where $\beta=1$, which will be used to train and test neural
networks later. The volatilities $\sigma_{approx,l}^{I}$ are grouped
into $mn$ data to form $\tilde{L}$ surfaces $\sigma_{approx,s}^{I}$$(T_{s,k_{1}},K_{s,k_{2}})$, 
with respect to $T_{s,k_{1}}$ and $K_{s,k_{2}}$, where $\tilde{L}=L/(mn)$,
$s=1,\cdots,\tilde{L}$, $k_{1}=1,\cdots,m$, and $k_{2}=1,\cdots,n$.
In other words, $L$ is the total number of generated data, and
$\tilde{L}$ is the number of volatility surfaces (a sort of partition
of the data) with $mn$ grid points.

When constructing grid points for the surfaces, $K_{s,k_{2}}$ is
set as a function of $T_{s,k_{1}}$; that is, $K_{s,k_{2}}=K_{s,k_{2}}(T_{s,k_{1}})$.
This mechanism is intended to widen the width of strike range $[K_{s,1},K_{s,n}]$
as $T_{s,k_{1}}$ gets larger, in other words,
\[
\log\frac{K_{s,n}}{f_{0,s}}-\log\frac{K_{s,1}}{f_{0,s}}\propto{\rm std}\left[\log\frac{f_{T_{s,k_{1}},s}}{f_{0,s}}\right],
\]
where $f_{t,s}$ is the forward price at $t$ under the parameters
$\alpha_{0,s}$, $\tilde{\nu}_{s}$$\left(:=\nu_{s}/\sqrt{T_{s,k_{1}}}\right)$,
and $\rho_{s}$. Note that the parameter $\tilde{\nu}_{s}$ depends
on $T_{s,k_{1}}$. As mentioned before, practitioners usually fit the SABR
model to the market data $\sigma^{I}(T)$ for each maturity $T$ separately.
If so, as shown in Figure 1, the estimates of $\nu$ tend to
get larger as $T$ becomes shorter in market data. Hence, $\tilde{\nu}_{s}$
is set up in a way to capture the phenomenon.

\begin{figure}
\centering{}\includegraphics[scale=0.62]{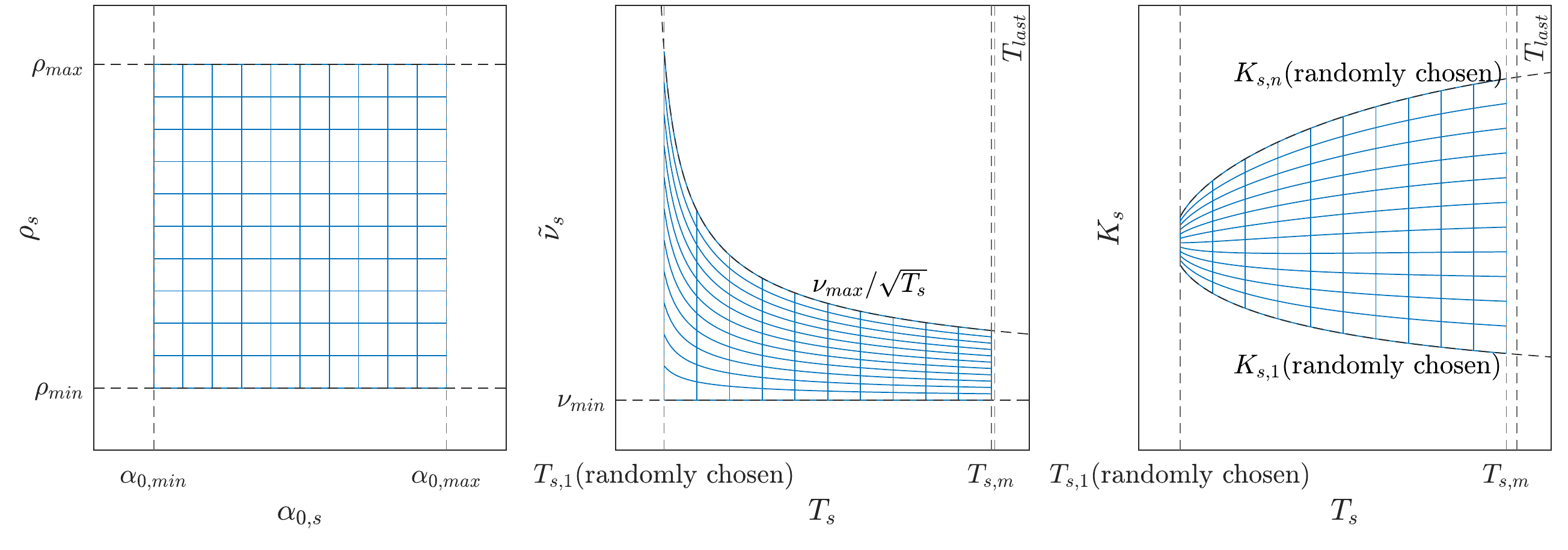}\caption{An illustration of the grid of $\alpha_{0,s}$, $\rho_{s}$, $\tilde{\nu}_{s}$,
$K_{s}$, and $T_{s}$.}
\end{figure}
Let us explain the construction process of the grid points in more
detail. Suppose that data is generated until the time $T_{last}$.
The maturity $T_{s,k_{1}}$ and strike $K_{s,k_{2}}$ for the $s$th
surface $\sigma_{approx,s}^{I}(T_{s,k_{1}},K_{s,k_{2}})$ are randomly
chosen as follows:
\begin{enumerate}
\item Set a time grid interval $\Delta T=T_{last}/m$, initialize the first
time point $T_{s,1}$ randomly in $\left(0,\Delta T\right]$, and
choose the other points equidistantly by $T_{s,k_{1}}=T_{s,1}+\left(k_{1}-1\right)\Delta T$.
\item Determine the start point $K_{s,1}$ and the end point $K_{s,n}$
of $K_{s}$-range by the formula 
\begin{align*}
K_{s,1} & =f_{0,s}\exp\left(-\frac{1}{2}\frac{\alpha_{0,s}}{\tilde{\nu}_{s}^{2}}\left(\exp\left\{ \tilde{\nu}_{s}^{2}T_{s,k_{1}}\right\} -1\right)-\eta_{f}\frac{\alpha_{0,s}}{\tilde{\nu}_{s}}\left(\exp\left\{ \tilde{\nu}_{s}^{2}T_{s,k_{1}}\right\} -1\right)^{1/2}\right),\\
K_{s,n} & =f_{0,s}\exp\left(-\frac{1}{2}\frac{\alpha_{0,s}^{2}}{\tilde{\nu}_{s}^{2}}\left(\exp\left\{ \tilde{\nu}_{s}^{2}T_{s,k_{1}}\right\} -1\right)+\eta_{f}\frac{\alpha_{0,s}}{\tilde{\nu}_{s}}\left(\exp\left\{ \tilde{\nu}_{s}^{2}T_{s,k_{1}}\right\} -1\right)^{1/2}\right),
\end{align*}
where $\eta_{f}$ follows the uniform distribution ${\rm U}\left(0.842,2.576\right)$,
and the hidden state $\alpha_{0,s}$ and parameters $\nu_{s}$ (not
$\tilde{\nu}_{s}$) and $\rho_{s}$ are also sampled uniformly within
predetermined limits. That is, 
\[
\alpha_{0,s}\sim{\rm U}\left(\alpha_{0,min},\alpha_{0,max}\right),\;\nu_{s}\sim{\rm U}\left(\nu_{min},\nu_{max}\right),\;\rho_{s}\sim{\rm U}\left(\rho_{min},\rho_{max}\right).
\]
 
\item Equidistantly partition the interval $\left[K_{s,1},K_{s,n}\right]$
for each $T_{s,k_{1}}$ by $K_{s,k_{2}}=K_{s,1}+\left(k_{2}-1\right)\Delta K$,
where $\Delta K=\left(K_{s,n}-K_{s,1}\right)/n$.
\end{enumerate}
If the number $\tilde{L}$ of the surfaces is reasonably large, the
random numbers $\{(\alpha_{0,s},\nu_{s},\rho_{s}):s=1,\cdots,\tilde{L}\}$
may fill most of the parameter space $\left\{ \alpha_{0,min}\leq\alpha_{0}\leq\alpha_{0,max}\right\} \times\left\{ \nu_{min}\leq\nu\leq\nu_{max}\right\} \times\left\{ \rho_{min}\leq\rho\leq\rho_{max}\right\} $
evenly and densely. In addition, we suppose $f_{0,s}=1$ without loss
of generality. Note that $f_{t,s}$ of the SABR model where $\beta=1$
can be expressed as $d\left(f_{t,s}/f_{0,s}\right)=\alpha_{t,s}\left(f_{t,s}/f_{0,s}\right)dW_{t}$.
This notion implies that an option price $c$ is homogeneous of degree one in both $f_{0,s}$ and $K$ under the SABR model, that is, $c=c(K/f_{0,s})$ (\citet{garcia2000pricing}), which supports the assumption $f_{0,s}=1$.
Figure 3 is an illustration of the grids $\alpha_{0,s}$, $\rho_{s}$,
$\tilde{\nu}_{s}$, $K_{s}$, and $T_{s}$. The grid for $\alpha_{0}$
and $\rho$ is partitioned evenly, the partition $\left\{ T_{s,k_{1}}\right\} $
is chosen randomly, but $T_{s,1}$ cannot be less than $0$, and $T_{s,m}$
cannot be greater than $T_{last}$. The upper boundary of $\tilde{\nu}_{s}$
is dependent on $T_{s,k_{1}}$ because $\tilde{\nu}_{s}=\nu_{max}/\sqrt{T_{s,k_{1}}}$,
where $\nu_{max}$ is selected based on Figure 1. The strike boundary
also relies on $T_{s,k_{1}}$, and it widens and narrows under
the influence of the value of the random variable $\eta_{f}$. All
grid points inside the boundaries are split equidistantly.

To obtain the approximate implied volatility $\sigma_{approx,s}^{I}(T_{s,k_{1}},K_{s,k_{2}})$,
we simulate $N$ paths of the SABR model ($\beta=1$) under the parameters
$\left(\alpha_{0,s},\tilde{\nu}_{s},\rho_{s}\right)$ using the Monte-Carlo
Euler scheme, which can be expressed as follows:
\begin{align*}
f_{t+\Delta t,s} & =f_{t,s}+\alpha_{t,s}f_{t,s}\left(e_{t,s}\sqrt{\Delta t}\right), \\
\alpha_{t+\Delta t,s} & =\alpha_{t,s}+\tilde{\nu}_{s}\alpha_{t,s}\left(\rho_{s}e_{t,s}\sqrt{\Delta t}+\sqrt{1-\rho_{s}^{2}}\tilde{e}_{t,s}\sqrt{\Delta t}\right),
\end{align*}
where $t=\Delta t,2\Delta t,\cdots,T_{s,m}-\Delta t$, $f_{0,s}=1$, and
$\left(e_{t,s},\tilde{e}_{t,s}\right)$ are independent standard normal
random variables. Further, the MC gives an approximate price $c_{approx,s,k_{1},k_{2}}$
for the true option price $c_{true,s,k_{1},k_{2}}$ under the SABR
model in the following way:
\[
c_{approx,s,k_{1},k_{2}}=\frac{1}{N}\sum_{j=1}^{N}c_{approx,s,k_{1},k_{2},j}=\frac{1}{N}\sum_{j=1}^{N}q\left(f_{T_{l,k_{1}},s,j}, K_{s,k_{2}}\right), 
\]
where the subscript $j$ represents that the value originated from
the $j$th path, and $q\left(\cdot;K_{s,k_{2}}\right)$ is the option
payoff for strike $K_{s,k_{2}}$. Price $c_{approx,s,k_{1},k_{2}}$ 
is then converted to its implied volatility $\sigma_{approx,s,k_{1},k_{2}}^{I}$.
We use Powell's method to obtain the implied volatility by minimizing
$({\rm BS}(\sigma_{approx,s,k_{1},k_{2}}^{I})-c_{approx,s,k_{1},k_{2}})^{2}$,
where ${\rm BS}\left(\sigma\right)$ is the Black-Scholes formula
for the option. In this process, we only utilize out-of-the-money
(OTM) call and put options. The price of OTM call options explodes occasionally. Therefore, if the standard deviation of the simulations exceeds by 100 times the average, the data were excluded from the experiment (approximately 3.6\% of the data are excluded in this manner). 

\begin{table}
\centering{}%
\begin{tabular}{cccc}
\toprule 
\multirow{2}{*}{dataset} & $N$ & $\tilde{L}$ & $L$\tabularnewline
 & (\# of paths) & (\# of surface) & (\# of total data)\tabularnewline
\midrule
\midrule 
training set ($\mathcal{D}_{train}$) & 500k & 1.2M & 480M\tabularnewline
validation set ($\mathcal{D}_{validate}$) & 500k & 0.1M & 40M\tabularnewline
test set ($\mathcal{D}_{test}$) & 500k & 0.1M & 40M\tabularnewline
more accurate set ($\mathcal{D}_{test}$') & 12.5M & 0.25M & 100M\tabularnewline
\midrule 
total &  & 1.65M & 660M\tabularnewline
\bottomrule
\end{tabular}\caption{\label{tab:datasets}This table displays the datasets for our experiments,
which contain approximate implied volatilities $\sigma_{approx,l}^{I}$
for $\sigma_{true,l}^{I}$ ($l=1,\cdots,L$). They are generated by
the MC with $N$ paths for a time interval of 0.002. Furthermore, $\sigma_{approx,l}^{I}$
are grouped to form $\tilde{L}$ surfaces. The symbols "k" and "M"
indicate one thousand and one million, respectively.}
\end{table}
The hyperparameters are chosen as follows: $m=20$, $n=20$, $\alpha_{0,min}=0.01$,
$\alpha_{0,max}=0.5$, $\nu_{min}=0.01$, $\nu_{max}=2$, $\rho_{min}=-0.99$,
$\rho_{max}=0.1$, and $T_{last}=2$. With the hyperparameters, we
separately make 1.2M surfaces ($N=5{\rm E}+5$, $L=4.8{\rm E}+8$)
for training, 0.1M surfaces ($N=5{\rm E}+5$, $L=4{\rm E}+7$) for
a validation, 0.1M surfaces for a test with simulation accuracy $\Delta t=0.002$
and $N=5{\rm E}+5$. In fact, we generate 250k paths with the antithetic
variate method in the experiment, but it is known that this approach
is superior to when 500k paths are generated without the method.
Moreover, we create 0.25M surfaces with a higher accuracy
$N=1.25{\rm E}+7$ ($L=1{\rm E}+8$) while keeping $\Delta t$ as
$0.002$. These additional data are employed for error analysis,
which will be explained in the following section. For convenience,
the four kinds of datasets are summarized in Table \ref{tab:datasets}.
Contrarily, to generate the datasets, we performed the MC in
parallel using many GPUs (GeForce GTX 1080 TI $\times8$, GeForce
RTX 2080 TI $\times16$, Tesla V100 $\times2$). Despite using many
GPUs, the procedure took about a month to complete.

\section{\label{sec:test}Network-based prediction of implied volatilities }

In this section, we train neural networks of various structures using
the training dataset $\mathcal{D}_{train}$ and the validation dataset
$\mathcal{D}_{validate}$. Subsequently, we predict the data in the test dataset
$\mathcal{D}_{test}$ with the best-performing network among them.
We also predict the data in the more accurate test dataset $\mathcal{D}_{test}'$ to evaluate the networks by estimating $\mathcal{E}_{pred}\left(\hat{\Gamma}\right)$
with Theorem \ref{thm:main_theorem} in Section \ref{sec:theorem}.
In addition, new notations are introduced confirming that $\mathcal{E}_{fit}\left(\hat{\Gamma},\mathcal{D}\right)=\mathcal{E}_{fit}\left(\hat{\Gamma},M';L'\right)$,
$\mathcal{E}_{pred}\left(\hat{\Gamma};\mathcal{D}\right)=\mathcal{E}_{pred}\left(\hat{\Gamma};L'\right)$,
$\mathcal{E}_{approx}\left(\mathcal{D}\right)=\mathcal{E}_{approx}\left(M';L'\right)$,
$\mathcal{N}_{pred}\left(\hat{\Gamma},\mathcal{D}\right)=\mathcal{N}_{pred}\left(\hat{\Gamma},M';L'\right)$
for the dataset $\mathcal{D}$ with data length $L'$ for $M'$ simulations
(i.e., see below for the definition of $\mathcal{N}_{pred}$). The notations
are used when it is desirable to emphasize $\mathcal{D}$ more than
$L'$ and $M'$ in the context.

The expected number $\mathcal{N}_{pred}$ of virtual simulations to
achieve $\mathcal{E}_{pred}\left(\hat{\Gamma}\right)$ is defined
as follows:
\begin{equation}
\mathcal{N}_{pred}\left(\hat{\Gamma},M';L'\right)=M'\frac{\mathcal{E}_{fit}\left(\hat{\Gamma},M';L'\right)-\mathcal{E}_{pred}\left(\hat{\Gamma}\right)}{\mathcal{E}_{pred}\left(\hat{\Gamma}\right)},\label{eq:N_pred}
\end{equation}
which is utilized as an indicator of the network performance, along
with $\mathcal{E}_{pred}\left(\hat{\Gamma}\right)$. The definition
is plausible because $\mathcal{E}_{approx}\left(M';L'\right)=\mathcal{E}_{fit}\left(\hat{\Gamma},M';L'\right)-\mathcal{E}_{pred}\left(\hat{\Gamma};L'\right)$,
and it is expected that $\left|\mathcal{E}_{pred}\left(\hat{\Gamma}\right)-\mathcal{E}_{pred}\left(\hat{\Gamma};L'\right)\right|$
is considerably small. For instance, suppose that $\mathcal{N}_{pred}$ is
approximately one million. This supposition implies that simulations should be performed one million times so that the MC can attain the accuracy of the network.

We use an extensive feedforward neural network with millions of weights.
By increasing network size when possible, we expect that approximation
capability will be maximized. It accepts the following five inputs
\begin{gather*}
\boldsymbol{T}_{s}=\left[\begin{array}{cccc}
T_{s,1} & T_{s,1} & \cdots & T_{s,1}\\
T_{s,2} & T_{s,2} & \cdots & T_{s,2}\\
\vdots & \vdots & \ddots & \vdots\\
T_{s,m} & T_{s,m} & \cdots & T_{s,m}, 
\end{array}\right],\quad\boldsymbol{K}_{s}=\left[\begin{array}{cccc}
K_{s,1,1} & K_{s,1,2} & \cdots & K_{s,1,n}\\
K_{s,2,1} & K_{s,2,2} & \cdots & K_{s,2,n}\\
\vdots & \vdots & \ddots & \vdots\\
K_{s,m,1} & K_{s,m,2} & \cdots & K_{s,m,n}
\end{array}\right],\\
\boldsymbol{\alpha}_{0,s}=\left[\begin{array}{cccc}
\alpha_{0,s} & \alpha_{0,s} & \cdots & \alpha_{0,s}\\
\alpha_{0,s} & \alpha_{0,s} & \cdots & \alpha_{0,s}\\
\vdots & \vdots & \ddots & \vdots\\
\alpha_{0,s} & \alpha_{0,s} & \cdots & \alpha_{0,s}
\end{array}\right],\quad\boldsymbol{\nu}_{s}=\left[\begin{array}{cccc}
\nu_{s} & \nu_{s} & \cdots & \nu_{s}\\
\nu_{s} & \nu_{s} & \cdots & \nu_{s}\\
\vdots & \vdots & \ddots & \vdots\\
\nu_{s} & \nu_{s} & \cdots & \nu_{s}
\end{array}\right],\quad\boldsymbol{\rho}_{s}=\left[\begin{array}{cccc}
\rho_{s} & \rho_{s} & \cdots & \rho_{s}\\
\rho_{s} & \rho_{s} & \cdots & \rho_{s}\\
\vdots & \vdots & \ddots & \vdots\\
\rho_{s} & \rho_{s} & \cdots & \rho_{s}
\end{array}\right],
\end{gather*}
where $K_{s,k_{1},k_{2}}=K_{s,k_{2}}\left(T_{s,k_{1}}\right)$, and produces the following SABR volatilities:
\[
\boldsymbol{\sigma}_{net,s}^{I}=\left[\begin{array}{ccc}
\sigma_{net,s}^{I}\left(T_{s,1},K_{s,1,1},\alpha_{0,s},\nu_{s},\rho_{s}\right) & \cdots & \sigma_{net,s}^{I}\left(T_{s,1},K_{s,1,n},\alpha_{0,s},\nu_{s},\rho_{s}\right)\\
\vdots & \ddots & \vdots\\
\sigma_{net,s}^{I}\left(T_{s,m},K_{s,m,1},\alpha_{0,s},\nu_{s},\rho_{s}\right) & \cdots & \sigma_{net,s}^{I}\left(T_{s,1},K_{s,m,n},\alpha_{0,s},\nu_{s},\rho_{s}\right)
\end{array}\right].
\]
It is worthy to recall that $\boldsymbol{\sigma}_{approx,s}^{I}$
($s=1,2,\cdots,\tilde{L}$) is also generated as a two-dimensional form
via
\[
\boldsymbol{\sigma}_{approx,s}^{I}=\left[\begin{array}{ccc}
\sigma_{appox,s}^{I}\left(T_{s,1},K_{s,1,1},\alpha_{0,s},\nu_{s},\rho_{s}\right) & \cdots & \sigma_{approx,s}^{I}\left(T_{s,1},K_{s,1,n},\alpha_{0,s},\nu_{s},\rho_{s}\right)\\
\vdots & \ddots & \vdots\\
\sigma_{approx,s}^{I}\left(T_{s,m},K_{s,m,1},\alpha_{0,s},\nu_{s},\rho_{s}\right) & \cdots & \sigma_{approx,s}^{I}\left(T_{s,1},K_{s,m,n},\alpha_{0,s},\nu_{s},\rho_{s}\right)
\end{array}\right].
\]
This kind of approach would help a network in filtering out simulation
noises by checking adjacent values. \citet{dimitroff2018volatility}
and \citet{bayer2019deep} adopted similar approaches. 

The optimal weights $\hat{\Gamma}$ are determined by minimizing the sum
of squared differences between $\boldsymbol{\sigma}_{net,s}^{I}$
with $\boldsymbol{\sigma}_{approx,s}^{I}$ in $\mathcal{D}_{train}$.
To this end, the adaptive moment estimation (ADAM, \citet{kingma2014adam})
is used with the batch size 100. The learning rate is initially set
to be $1e^{-5}$, but it is reduced by a factor of 10 every time the
loss value for $\mathcal{D}_{validate}$ is not improved. When the
learning rate reaches the value $1e^{-8}$, the training is finished.
We refer to \citet{liu2019pricing} for the configuration. 

\begin{table}
\centering{}%
\begin{tabular}{cccccccccc}
\toprule 
\multirow{3}{*}{\# of nodes } & \multicolumn{4}{c}{$\mathcal{E}_{pred}\left(\hat{\Gamma}\right)$} &  & \multicolumn{4}{c}{$\mathcal{N}_{pred}\left(\hat{\Gamma},\mathcal{D}_{test}\right)$}\tabularnewline
\cmidrule{2-5} \cmidrule{3-5} \cmidrule{4-5} \cmidrule{5-5} \cmidrule{7-10} \cmidrule{8-10} \cmidrule{9-10} \cmidrule{10-10} 
 & \multicolumn{4}{c}{\# of layers} &  & \multicolumn{4}{c}{\# of layers}\tabularnewline
 & 4 & 5 & 6 & 7 &  & 4 & 5 & 6 & 7\tabularnewline
\midrule
3,000 & 2.22E-7 & 2.13E-7 & 2.27E-7 & 2.04E-7 &  & 11.80M & 12.31M & 11.52M & 12.84M\tabularnewline
5,000 & 2.23E-7 & 2.05E-7 & 2.05E-7 & 2.03E-7 &  & 11.73M & 12.81M & 12.81M & 12.88M\tabularnewline
7,000 & $\boldsymbol{2.03E}$-$\boldsymbol{7}$ & 2.10E-7 & 2.06E-7 & 2.06E-7 &  & $\boldsymbol{12.95M}$ & 12.50M & 12.71M & 12.72M\tabularnewline
\bottomrule
\end{tabular}\caption{\label{tab:structure}This table denotes the values of the two indicators of network performance, $\mathcal{E}_{pred}\left(\hat{\Gamma}\right)$
and $\mathcal{N}_{pred}\left(\hat{\Gamma},\mathcal{D}_{test}\right)$,
for various network structures ("M" indicates one million). }
\end{table}

We train and test various structures of the network while varying the
numbers of layers and nodes per each hidden layer. Table \ref{tab:structure}
reveals $\mathcal{E}_{pred}\left(\hat{\Gamma}\right)$ and $\mathcal{N}_{pred}\left(\hat{\Gamma},\mathcal{D}_{test}\right)$
for the networks, which are induced using Theorem \ref{thm:main_theorem}
with $\mathcal{E}_{fit}\left(\hat{\Gamma},\mathcal{D}_{test}\right)$
and $\mathcal{E}_{fit}\left(\hat{\Gamma},\mathcal{D}_{test}'\right)$.
In addition, $\mathcal{E}_{pred}\left(\hat{\Gamma}\right)$ are spread from $2.03\times10^{-7}$
to $2.22\times10^{-7}$, and $\mathcal{N}_{pred}\left(\hat{\Gamma},\mathcal{D}_{test}\right)$
are distributed from 11.52M to 12.95M ("M" indicates one million).
After observing the figures on the table, we conclude that there are
no significant differences between the performances of the networks.
However, notably, the values tend to improve slightly
as the numbers of layers and nodes increase. We chose the network
with 4 layers and 7,000 nodes as the best performance network. Thus, we
will only continue subsequent tests for the chosen network.

\begin{figure}
\begin{centering}
\includegraphics[scale=0.91]{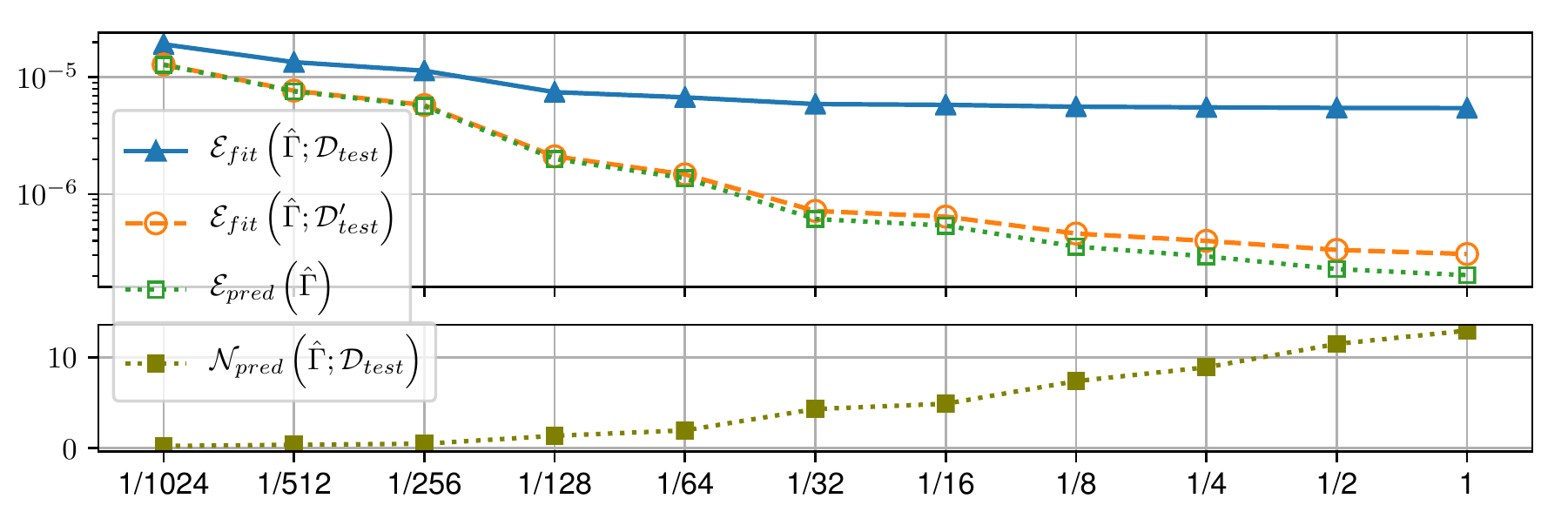}
\par\end{centering}
\caption{\label{fig:data_size}This figure draws $\mathcal{E}_{fit}\left(\hat{\Gamma},\mathcal{D}_{test}\right)$,
$\mathcal{E}_{fit}\left(\hat{\Gamma},\mathcal{D}_{test}'\right)$,
$\mathcal{E}_{pred}\left(\hat{\Gamma}\right)$, and $\mathcal{N}_{pred}\left(\hat{\Gamma},\mathcal{D}_{test}\right)$
with respect to training data size.}
\end{figure}

Subsequently, we analyze the effect of training data size $L$ on
network performance. Figure \ref{fig:data_size} draws $\mathcal{E}_{fit}\left(\hat{\Gamma},\mathcal{D}_{test}\right)$,
$\mathcal{E}_{fit}\left(\hat{\Gamma},\mathcal{D}_{test}'\right)$,
$\mathcal{E}_{pred}\left(\hat{\Gamma}\right)$, and $\mathcal{N}_{pred}\left(\hat{\Gamma},\mathcal{D}_{test}\right)$
with respect to $L$. To obtain the values, training and test are
repeated using part of the training set $\mathcal{D}_{train}$. For
instance, the values for the "$1/128$" subset originate from
the network trained by only using $1/128$ part of $\mathcal{D}_{train}$.
A subset contains other subsets with smaller sizes. For example,
the subset "$1/256$" contains the subset "$1/512$", and the subset
"$1/512$" contains the subset "$1/1024$". In the figure, $\mathcal{E}_{fit}\left(\hat{\Gamma},\mathcal{D}_{test}\right)$,
$\mathcal{E}_{fit}\left(\hat{\Gamma},\mathcal{D}_{test}'\right)$,
and $\mathcal{E}_{pred}\left(\hat{\Gamma}\right)$ decrease, and $\mathcal{N}_{pred}\left(\hat{\Gamma}\right)$
increases as the training data size $L$ increases. This phenomenon
can be understood through Proposition \ref{prop:expectation_of_error}
and Theorem \ref{thm:main_theorem}. Accordingly, it is established
that
\[
E\left[\mathcal{E}_{fit}\left(\hat{\Gamma},\mathcal{D}\right)\right]=E\left[\mathcal{E}_{pred}\left(\hat{\Gamma};\mathcal{D}\right)\right]+E\left[\mathcal{E}_{approx}\left(\mathcal{D}\right)\right]=\frac{1}{LM}\left\langle \boldsymbol{q_{l}}\boldsymbol{W}^{-1}\boldsymbol{W}^{\beta}\boldsymbol{W}^{-1}\boldsymbol{q_{l}^{T}}\right\rangle _{l,L'}+\frac{1}{M'}\left\langle \beta_{l}^{2}\right\rangle _{l,L'},
\]
\[
E\left[\mathcal{E}_{pred}\left(\hat{\Gamma}\right)\right]=\frac{1}{LM}\left\langle \boldsymbol{q_{l}}\boldsymbol{W}^{-1}\boldsymbol{W}^{\beta}\boldsymbol{W}^{-1}\boldsymbol{q_{l}^{T}}\right\rangle _{l},\quad E\left[\mathcal{N}_{pred}\left(\hat{\Gamma};D\right)\right]=\frac{LM}{M'}\left\langle \beta_{l}^{2}\right\rangle _{l,L'}\left\langle \boldsymbol{q_{l}}\boldsymbol{W}^{-1}\boldsymbol{W}^{\beta}\boldsymbol{W}^{-1}\boldsymbol{q_{l}^{T}}\right\rangle ^{-1}
\]
for the dataset $\mathcal{D}$ with data length $L'$ for $M'$ simulations
($M$ is the number of simulations to generate the data in $\mathcal{D}_{train}$).
By the formulas above, $\mathcal{E}_{pred}\left(\hat{\Gamma};\mathcal{D}\right)$
is reduced with a high probability as $L$ increases, but
$\mathcal{E}_{approx}\left(\mathcal{D}\right)$ is independent of
$L$. Thus, as the training data size $L$ grows, $\mathcal{E}_{fit}\left(\hat{\Gamma},\mathcal{D}\right)$
probably decreases and converges to $\mathcal{E}_{app}\left(\mathcal{D}\right)$.
It also seems reasonable that $\mathcal{E}_{pred}\left(\hat{\Gamma}\right)$
and $\mathcal{N}_{pred}\left(\hat{\Gamma};D\right)$ monotonically
decrease and increase, respectively. Nonetheless, according to the formulas
above, $\mathcal{E}_{pred}\left(\hat{\Gamma}\right)$ and $\mathcal{N}_{pred}\left(\hat{\Gamma};D\right)$
should approximately be half and doubled, respectively. Nevertheless,
in the graphs, the decrease rate of $\mathcal{E}_{pred}\left(\hat{\Gamma}\right)$
is higher than 0.5, and the increase rate of $\mathcal{N}_{pred}\left(\hat{\Gamma};D\right)$
is lower than 2. These results may be because the optimal weights $\hat{\Gamma}$
are all different depending on the training data size $L$, highlighting
that matrices such as $\boldsymbol{W}$ and $\boldsymbol{W}^{\beta}$
adjust if $L$ changes. Further, we guess that the approximation of the Hessian for the loss function in (\ref{eq:jaco_hess}) leads to these outcomes. We leave this issue for future endeavors.
\begin{table}[t]
\begin{centering}
\begin{tabular}{ccccccccccccccc}
\toprule 
\multirow{2}{*}{} & \multicolumn{2}{c}{$\alpha_{0}$} &  & \multicolumn{2}{c}{$\nu$} &  & \multicolumn{2}{c}{$\rho$} &  & \multicolumn{2}{c}{$K$} &  & \multicolumn{2}{c}{$T$}\tabularnewline
\cmidrule{2-3} \cmidrule{3-3} \cmidrule{5-6} \cmidrule{6-6} \cmidrule{8-9} \cmidrule{9-9} \cmidrule{11-12} \cmidrule{12-12} \cmidrule{14-15} \cmidrule{15-15} 
 & $\mathcal{E}_{pred}$ & $\mathcal{N}_{pred}$ &  & $\mathcal{E}_{pred}$ & $\mathcal{N}_{pred}$ &  & $\mathcal{E}_{pred}$ & $\mathcal{N}_{pred}$ &  & $\mathcal{E}_{pred}$ & $\mathcal{N}_{pred}$ &  & $\mathcal{E}_{pred}$ & $\mathcal{N}_{pred}$\tabularnewline
\midrule
$\left[{\rm Q}_{0},{\rm Q}_{1}\right]$ & 2.1E-8 & 4.2M &  & 1.7E-8 & 7.5M &  & 5.8E-7 & 11.7M &  & 2.7E-8 & 12.3M &  & 1.8E-7 & 15.3M\tabularnewline
$\left[{\rm Q}_{1},{\rm Q}_{2}\right]$ & 7.0E-8 & 9.4M &  & 2.2E-8 & 11.0M &  & 2.8E-7 & 12.6M &  & 2.2E-8 & 9.6M &  & 2.1E-7 & 12.8M\tabularnewline
$\left[{\rm Q}_{2},{\rm Q}_{3}\right]$ & 1.6E-7 & 11.9M &  & 1.1E-7 & 10.1M &  & 8.9E-8 & 17.5M &  & 2.9E-7 & 6.7M &  & 2.1E-7 & 12.6M\tabularnewline
$\left[{\rm Q}_{3},{\rm Q}_{4}\right]$ & 3.0E-7 & 12.6M &  & 4.2E-7 & 9.5M &  & 3.5E-8 & 18.5M &  & 4.2E-7 & 10.4M &  & 2.1E-7 & 12.1M\tabularnewline
$\left[{\rm Q}_{4},{\rm Q}_{5}\right]$ & 4.7E-7 & 13.9M &  & 5.4E-7 & 15.0M &  & 4.2E-8 & 6.6M &  & 3.1E-7 & 22.2M &  & 2.1E-7 & 12.1M\tabularnewline
\bottomrule
\end{tabular} \medskip{}
\par\end{centering}
\centering{}%
\begin{tabular}{ccc}
\toprule 
 & $\mathcal{E}_{pred}$ & $\mathcal{N}_{pred}$\tabularnewline
\midrule 
$\left[{\rm Q}_{0},{\rm Q}_{5}\right]$ & 2.03E-7 & 12.95M\tabularnewline
\bottomrule
\end{tabular}\caption{\label{tab:range_restriction} The table shows $\mathcal{E}_{pred}\left(\hat{\Gamma}\right)$
and $\mathcal{N}_{pred}\left(\hat{\Gamma},\mathcal{D}_{test}\right)$
when restricting the range for one of the inputs $\alpha_{0}$, $\nu$,
$\rho$, $K$, and $T$ into $\left[{\rm Q}_{k},{\rm Q}_{k+1}\right]$.
Here, $\left[{\rm Q}_{k},{\rm Q}_{k+1}\right]$ is a subset containing
the bottom $20k\%$ to $20(k+1)\%$ of a given set.}
\end{table}

We further investigate $\mathcal{E}_{pred}\left(\hat{\Gamma}\right)$
and $\mathcal{N}_{pred}\left(\hat{\Gamma},\mathcal{D}_{test}\right)$
while restricting the range of one of the inputs $\alpha_{0}$, $\nu$,
$\rho$, $K$, and $T$. Thus, let $\left[{\rm Q}_{k},{\rm Q}_{k+1}\right]$
be a subset containing the bottom $20k\%$ to $20(k+1)\%$ of a given
set. Recall that $\alpha_{0}$ is generated to be uniformly distributed
from $0$ to $2$ (if exactly speaking, from 0.01 to 2). Thus, $\left[{\rm Q}_{k},{\rm Q}_{k+1}\right]$
for $\alpha_{0}$ is $\{\left.\alpha_{0}\right|\frac{2}{5}k\leq\alpha_{0}\leq\frac{2}{5}\left(k+1\right)\}$.
Table \ref{tab:range_restriction} illustrates how $\mathcal{E}_{pred}\left(\hat{\Gamma}\right)$
and $\mathcal{N}_{pred}\left(\hat{\Gamma},\mathcal{D}_{test}\right)$
change when one of the inputs is restricted into $\left[{\rm Q}_{k},{\rm Q}_{k+1}\right]$.
In the lower table, $\left[{\rm Q}_{0},{\rm Q}_{5}\right]$ represents
that any restriction for the inputs is not imposed. Evidently, $\mathcal{N}_{pred}\left(\hat{\Gamma},\mathcal{D}_{test}\right)$
is large concerning specific domains (e.g., $\left[{\rm Q}_{4},{\rm Q}_{5}\right]$
for $\alpha_{0}$). The large $\mathcal{N}_{pred}\left(\hat{\Gamma},\mathcal{D}_{test}\right)$
means that $\mathcal{E}_{fit}\left(\hat{\Gamma},\mathcal{D}_{test}\right)/\mathcal{E}_{pred}\left(\hat{\Gamma}\right)$
is also high (see equation (\ref{eq:N_pred})). As $\mathcal{E}_{pred}\left(\hat{\Gamma}\right)$ 
positively correlates with $\mathcal{N}_{pred}\left(\hat{\Gamma},\mathcal{D}_{test}\right)$
in the table, this correlation leads to the fact that the domains with large $\mathcal{N}_{pred}\left(\hat{\Gamma},\mathcal{D}_{test}\right)$
have greater $\mathcal{E}_{fit}\left(\hat{\Gamma},\mathcal{D}_{test}\right)$.
Therefore, we conclude that the network reduces more noises in
those domains where $\mathcal{E}_{fit}\left(\hat{\Gamma},\mathcal{D}_{test}\right)$
is high. The phenomenon may occur because ordinary least squares (OLS)
is employed in this study. The MC produces larger standard deviations
$\beta_{l}$ on the specific domains. Although the targets $\sigma_{approx,l}^{I}$
for regression have different $\beta_{l}$, if OLS is adopted,
the capacity of the network can be exhausted due to $\sigma_{approx,l}^{I}$
with large deviations. Thus, the weighted least squares (WLS) should
be used to resolve it. However, estimating the deviations $\beta_{l}$
to utilize the WLS may be considerably difficult. 

\begin{figure}[t]
\begin{centering}
\includegraphics[scale=0.8]{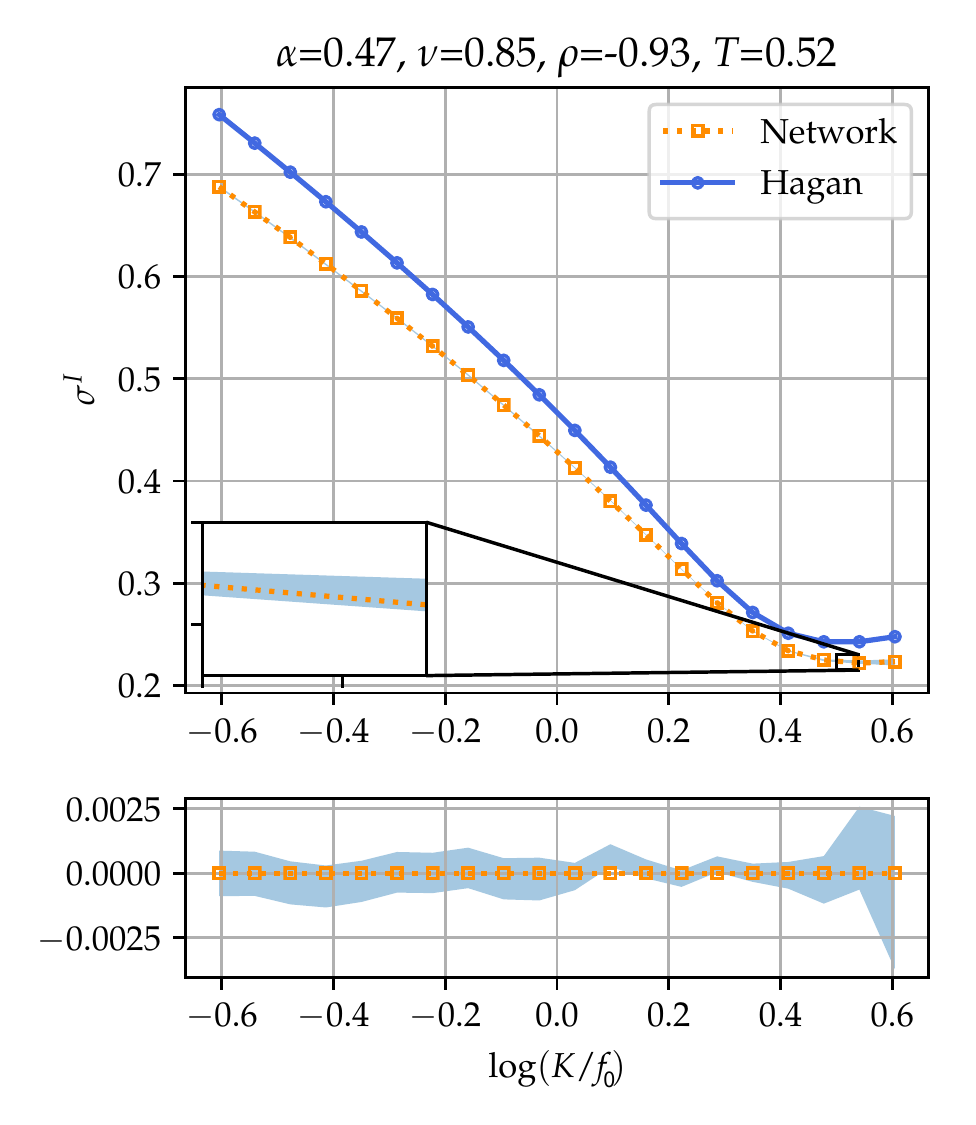} \includegraphics[scale=0.8]{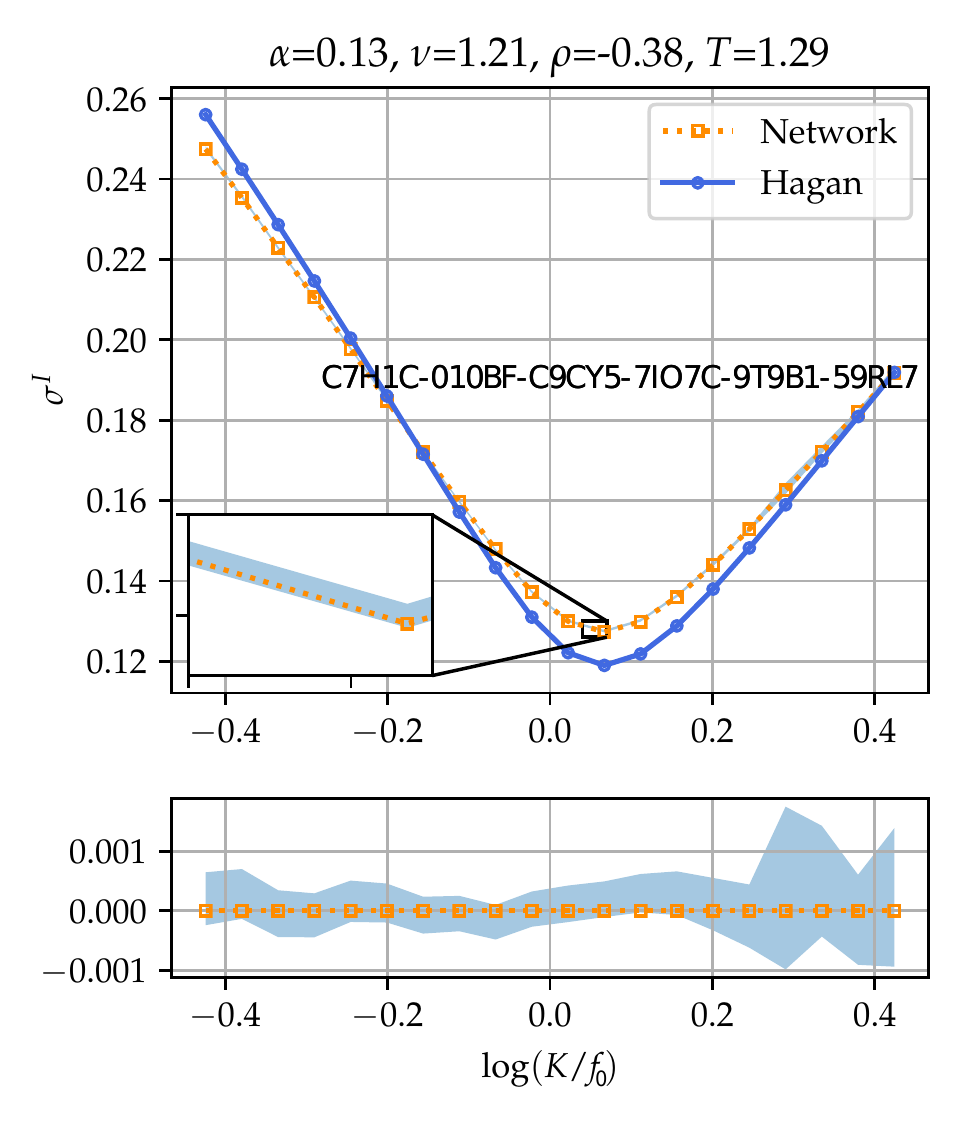}
\par\end{centering}
\caption{\label{fig:selection}The figures highlight the predictions of the best performance network for the data in the test set $\mathcal{D}_{test}$.
The blue regions indicate the 99\% confidence intervals computed by
10 million MC simulations. The intervals are drawn again centered
about $\sigma_{net}^{I}$ in the auxiliary subfigures at the bottom.}
\end{figure}
Finally, we reflect on the distribution of a random sample $\epsilon_{pred}$
from $\mathcal{D}_{pop}$, where $\mathcal{D}_{pop}$ is the population
producing $\epsilon_{pred,l}(=\sigma_{net,l}^{I}-\sigma_{true,l}^{I})$.
(Roughly, $\mathcal{D}_{pop}=\{\epsilon_{pred,l}\}_{l=1,\cdots,\infty}$
in that the population can be understood as the set of infinite samples.)
By doing so, we can guess how well $\{\sigma_{net,l}^{I}\}_{l=1,\cdots,L'}$
approximates $\{\sigma_{true,l}^{I}\}_{l=1,\cdots,L'}$. For instance,
we can talk about $P[|\sigma_{net,l}^{I}-\sigma_{true,l}^{I}|<1{\rm bp}]$.
First, $E[\epsilon_{pred}]=0$ because $E[\epsilon_{pred,l}]=0$ (see
relation (\ref{eq:err_dist})). Moreover, because $Var[\epsilon_{pred}]=E[\mathcal{E}_{pred}\left(\hat{\Gamma}\right)]$,
$Var[\epsilon_{pred}]$ is estimated as 2.03E-07 as depicted in Table \ref{tab:range_restriction}.
This association signifies that if the best performance network (i.e., 4 layers and 7,000 nodes per hidden layer) is utilized, ${\rm std}[\epsilon_{pred}]$ is about 0.00045 (0.45bp). Thus, under the assumption that $\epsilon_{pred}$ follows a normal distribution, approximately 97\% of $|\epsilon_{pred}|$ are within 1bp. This outcome obviously has remarkable accuracy. However, the normal assumption for the claim is hard to prove because estimating higher moments of $\epsilon_{pred}$ is difficult at the moment.
Alternatively, we try to draw many plots of $\sigma_{net}^{I}$
with respect to $\log\left(K/f_{0}\right)$, along with the 99\% confidence
intervals computed using 10 million MC simulations. We then confirm
that most of $\sigma_{net}^{I}$ are in the intervals. Owing to the
limitation of space, we select two and draw them in Figure \ref{fig:selection}.
In the figure, the blue regions indicate the 99\% confidence intervals.
The intervals are too thin to be observed, so auxiliary figures are
drawn at the bottom, in which the intervals are drawn centered about
$\sigma_{net}^{I}$. Notably, we can estimate the accuracy of the network through the subfigures. Based on the examination of the many plots, we speculate that the guess $|\epsilon_{pred}|<{\rm 1bp}$ with a high probability is not far wrong.

\section{Conclusion}

Recently, the application of deep learning algorithms has facilitated outstanding achievement in various fields. Considering that pricing options are truly essential in financial engineering, it seems inevitable to note
recent research using artificial neural networks to predict the
prices for particular parametric models. In our opinion, the models
without any pricing formulas should be studied more intensively. Nevertheless, measuring prediction errors is virtually impossible because true prices are unknown for such types of models. To resolve this problem, we develop a novel method based on the Monte-Carlo simulation and nonlinear regression. According to the method, the best performance network developed in this work produces the results as accurate as those of 13 million MC simulations. It is a remarkable result because the MC takes much more computational time in comparison to the network.

There are unresolved problems for future research. First, the decreasing
rate of the prediction errors does not exactly match with the value
our theory predicts. This discrepancy implies that the theory should be improved to address the gap. Second, the weighted least square should be introduced because the simulated option prices have different variances. Finally, only the expectation and variance of the prediction errors are provided in this work, but it is more desirable to investigate the theoretical distribution of the errors.

\section*{Acknowledgments}
We are grateful to Korean Asset Pricing, a bond rating agency located in Korea, for providing data in Figure 1. Jaegi Jeon received financial support from the National Research Foundation of Korea (NRF) of the Korean government (Grant No. NRF-2019R1I1A1A01062911). Jeonggyu Huh received financial support from the NRF (Grant No. NRF-2019R1F1A1058352).

\bibliographystyle{elsarticle-num-names}
\bibliography{ref}

\end{document}